\newif\ifdoublecol

\doublecoltrue
\documentclass[conference,letterpaper]{IEEEtran}
\IEEEoverridecommandlockouts

\usepackage{etex} 

\setlength{\textfloatsep}{2pt plus 3pt minus 2pt}

\setlength{\abovecaptionskip}{3pt plus 3pt minus 2pt} 
\setlength{\belowcaptionskip}{3pt plus 3pt minus 2pt} 

\setlength{\belowdisplayskip}{3pt plus 3pt minus 2pt}
 \setlength{\belowdisplayshortskip}{3pt plus 3pt minus 2pt}
\setlength{\abovedisplayskip}{3pt plus 3pt minus 2pt}
 \setlength{\abovedisplayshortskip}{3pt plus 3pt minus 2pt}

\ifCLASSINFOpdf
  \usepackage[pdftex]{graphicx}

  \usepackage{epstopdf}
  \epstopdfsetup{update,prepend,prefersuffix=false,suffix=}
  \DeclareGraphicsRule{.eps}{pdf}{.pdf}{`epstopdf #1}
  \pdfcompresslevel=9
\else
  \usepackage[dvips]{graphicx}
  \graphicspath{{../eps/}}
  \DeclareGraphicsExtensions{.eps}
\fi
\usepackage{cite}

\usepackage[cmex10]{amsmath}
\interdisplaylinepenalty=2500

\usepackage{xcolor,colortbl}
\definecolor{col1}{HTML}{3891A6}
\definecolor{col2}{HTML}{EF5B5B}
\definecolor{col3}{HTML}{3DDC97}

\usepackage{amsmath,amsthm,relsize}
\usepackage{amsfonts, amssymb, cuted}
\usepackage{mathtools}
\usepackage{algorithm}
\usepackage[noend]{algpseudocode}
\usepackage{cite}
\usepackage{soul}
\usepackage{graphicx}
\usepackage{tikz}
\usepackage{pgfplotstable}
\usepackage{pgfplots}
\usepackage{nicefrac}
\usepackage{enumitem}
\usepackage{support-caption}
\usepackage{subcaption}
\usepackage[font=footnotesize]{caption}
\usepackage{multirow}
\pgfplotsset{compat=1.15}
\usepackage{relsize}
\usetikzlibrary{patterns}
\usepackage{acro}
\usepackage{pifont}
\newtheorem{theorem}{Theorem}

\usepackage{todonotes}

\usepackage{mathtools}

\usepackage{array}
\usepackage{booktabs}

\usepackage{diagbox}
\usepackage{smartdiagram}
\usesmartdiagramlibrary{additions}
\usepackage{bm}
\usepackage{multicol}

\usepackage{tabularx}
\usepackage{colortbl}

\usepackage{hyperref}
\hypersetup{
    colorlinks=true,
    linkcolor=blue,
    citecolor=blue,
    filecolor=blue,
    urlcolor=blue
}

\usetikzlibrary{plotmarks}
  \usetikzlibrary{arrows.meta}
  \usepgfplotslibrary{patchplots}
  \usepackage{grffile}
  \pgfplotsset{plot coordinates/math parser=false}
  \newlength\figureheight
  \newlength\figurewidth
   \pgfplotsset{compat=1.11,
    /pgfplots/ybar legend/.style={
    /pgfplots/legend image code/.code={%
       \draw[##1,/tikz/.cd,yshift=-0.25em]
        (0cm,0cm) rectangle (3em,8pt);},
   },
}

\usepgfplotslibrary{groupplots,units}
\pgfplotsset{
  compat=1.9,
  unit code/.code 2 args={\si{#1#2}} 
}
\usepackage{siunitx}

\setlength\figureheight{0.4\textwidth}
\setlength\figurewidth{0.35\textwidth}

\usepackage{color}
\usepackage{colortbl}
\usepackage{multirow}

\newlength{\Oldarrayrulewidth}

\definecolor{intnull}{RGB}{213,229,255}
\definecolor{inteins}{RGB}{128,179,255}
\definecolor{intzwei}{RGB}{42,127,255}
\definecolor{intdrei}{RGB}{0,85,212}
\definecolor{intvier}{RGB}{0,51,128}
\definecolor{intfunf}{RGB}{0,34,85}
\usepackage{arydshln} 

\usetikzlibrary{decorations.pathreplacing}
\renewcommand*{\arraystretch}{.4}


\newtheorem{lemma}{Lemma}

\usepackage{arydshln}


\newcommand{\herm}{^{\mbox{\scriptsize H}}}

\newcommand{\vbar}{\raisebox{.17ex}{\rule{.04em}{1.35ex}}}
\newcommand{\vbarind}{\raisebox{.01ex}{\rule{.04em}{1.1ex}}}
\newcommand{\R}{\ifmmode{\rm I}\hspace{-.2em}{\rm R} \else ${\rm I}\hspace{-.2em}{\rm R}$ \fi}
\newcommand{\T}{\ifmmode{\rm I}\hspace{-.2em}{\rm T} \else ${\rm I}\hspace{-.2em}{\rm T}$ \fi}
\newcommand{\N}{\ifmmode{\rm I}\hspace{-.2em}{\rm N} \else \mbox{${\rm I}\hspace{-.2em}{\rm N}$} \fi}
\newcommand{\B}{\ifmmode{\rm I}\hspace{-.2em}{\rm B} \else \mbox{${\rm I}\hspace{-.2em}{\rm B}$} \fi}
\newcommand{\Hil}{\ifmmode{\rm I}\hspace{-.2em}{\rm H} \else \mbox{${\rm I}\hspace{-.2em}{\rm H}$} \fi}
\newcommand{\C}{\ifmmode\hspace{.2em}\vbar\hspace{-.31em}{\rm C} \else \mbox{$\hspace{.2em}\vbar\hspace{-.31em}{\rm C}$} \fi}
\newcommand{\Cind}{\ifmmode\hspace{.2em}\vbarind\hspace{-.25em}{\rm C} \else \mbox{$\hspace{.2em}\vbarind\hspace{-.25em}{\rm C}$} \fi}
\newcommand{\Q}{\ifmmode\hspace{.2em}\vbar\hspace{-.31em}{\rm Q} \else \mbox{$\hspace{.2em}\vbar\hspace{-.31em}{\rm Q}$} \fi}
\newcommand{\Z}{\ifmmode{\rm Z}\hspace{-.28em}{\rm Z} \else ${\rm Z}\hspace{-.28em}{\rm Z}$ \fi}







\newtheorem{exmp}{Example}
\theoremstyle{definition}

\newtheorem{rem}{Remark}


\newcommand{\CF}[0]{{\mathcal{F}}}

\newcommand{\CJ}[0]{{\mathcal{J}}}
\newcommand{\CK}[0]{{\mathcal{K}}}

\newcommand{\CM}[0]{{\mathcal{M}}}
\newcommand{\CN}[0]{{\mathcal{N}}}

\newcommand{\CP}[0]{{\mathcal{P}}}

\newcommand{\CT}[0]{{\mathcal{T}}}


\newcommand{\Bw}[0]{{\mathbf{w}}}
\newcommand{\Bx}[0]{{\mathbf{x}}}

\newcommand{\BH}[0]{{\mathbf{H}}}

\newcommand{\BU}[0]{{\mathbf{U}}}


\newcommand{\SfP}[0]{{\mathsf{P}}}

\usepackage{tikz}


\usepackage{tikz}
\usetikzlibrary{arrows,
                calc,
                decorations.pathreplacing,
                calligraphy,
                matrix,
                positioning
                }


\DeclareAcronym{ADMM}{
    short = ADMM,
    long = alternating direction method of multipliers,
    list = Alternating Direction Method of Multipliers,
    tag = abbrev
}

\DeclareAcronym{AoA}{
    short = AoA,
    long = angle-of-arrival,
    list = Angle-of-Arrival,
    tag = abbrev
}

\DeclareAcronym{SISO}{
    short = SISO,
    long = single-input single-output,
    list = single-input single-output,
    tag = abbrev
}

\DeclareAcronym{MRT}{
    short = MRT,
    long = maximum ratio transmitter,
    list = maximum ratio transmitter,
    tag = abbrev
}

\DeclareAcronym{PDA}{
    short = PDA,
    long = placement delivery array,
    list = placement delivery array,
    tag = abbrev
}

\DeclareAcronym{EE}{
    short = EE,
    long = energy efficiency,
    list = energy efficiency,
    tag = abbrev
}

\DeclareAcronym{MDS}{
    short = MDS,
    long = maximum distance separation,
    list = maximum distance separation,
    tag = abbrev
}

\DeclareAcronym{SIC}{
    short = SIC,
    long = successive-interference-cancellation,
    list = successive-interference-cancellation,
    tag = abbrev
}

\DeclareAcronym{MAC}{
    short = MAC,
    long = multiple-access-channel,
    list = multiple-access-channel,
    tag = abbrev
}

\DeclareAcronym{AoD}{
    short = AoD,
    long = angle-of-departure,
    list = Angle-of-Departure,
    tag = abbrev
}

\DeclareAcronym{BB}{
    short = BB,
    long = base band,
    list = Base Band,
    tag = abbrev
}

\DeclareAcronym{BC}{
    short = BC,
    long = broadcast channel,
    list = Broadcast Channel,
    tag = abbrev
}

\DeclareAcronym{BS}{
    short = BS,
    long = base station,
    list = Base Station,
    tag = abbrev
}

\DeclareAcronym{BR}{
    short = BR,
    long = best response,
    list = Best Response, 
    tag = abbrev
}

\DeclareAcronym{CB}{
    short = CB,
    long = coordinated beamforming,
    list = Coordinated Beamforming,
    tag = abbrev
}

\DeclareAcronym{CC}{
    short = CC,
    long = coded caching,
    list = Coded Caching,
    tag = abbrev
}

\DeclareAcronym{CE}{
    short = CE,
    long = channel estimation,
    list = Channel Estimation,
    tag = abbrev
}

\DeclareAcronym{CoMP}{
    short = CoMP,
    long = coordinated multi-point transmission,
    list = Coordinated Multi-Point Transmission,
    tag = abbrev
}

\DeclareAcronym{CRAN}{
    short = C-RAN,
    long = cloud radio access network,
    list = Cloud Radio Access Network,
    tag = abbrev
}

\DeclareAcronym{CSE}{
    short = CSE,
    long = channel specific estimation,
    list = Channel Specific Estimation,
    tag = abbrev
}

\DeclareAcronym{CSI}{
    short = CSI,
    long = channel state information,
    list = Channel State Information,
    tag = abbrev
}

\DeclareAcronym{CSIT}{
    short = CSIT,
    long = channel state information at the transmitter,
    list = Channel State Information at the Transmitter,
    tag = abbrev
}

\DeclareAcronym{CU}{
    short = CU,
    long = central unit,
    list = Central Unit,
    tag = abbrev
}

\DeclareAcronym{D2D}{
    short = D2D,
    long = device-to-device,
    list = Device-to-Device,
    tag = abbrev
}

\DeclareAcronym{DE-ADMM}{
    short = DE-ADMM,
    long = direct estimation with alternating direction method of multipliers,
    list = Direct Estimation with Alternating Direction Method of Multipliers,
    tag = abbrev
}

\DeclareAcronym{DE-BR}{
    short = DE-BR,
    long = direct estimation with best response,
    list = Direct Estimation with Best Response,
    tag = abbrev
}

\DeclareAcronym{DE-SG}{
    short = DE-SG,
    long = direct estimation with stochastic gradient,
    list = Direct Estimation with Stochastic Gradient,
    tag = abbrev
}

\DeclareAcronym{DFT}{
	short = DFT,
	long = discrete fourier transform,
	list = Discrete Fourier Transform,
	tag = abbrev
}

\DeclareAcronym{DoF}{
    short = DoF,
    long = degrees of freedom,
    list = Degrees of Freedom,
    tag = abbrev
}

\DeclareAcronym{DL}{
    short = DL,
    long = downlink,
    list = Downlink,
    tag = abbrev
}

\DeclareAcronym{GD}{
	short = GD, 
	long = gradient descent,
	list = Gradeitn Descent,
	tag = abbrev
}

\DeclareAcronym{IBC}{
    short = IBC,
    long = interfering broadcast channel,
    list = Interfering Broadcast Channel,
    tag = abbrev
}

\DeclareAcronym{i.i.d.}{
    short = i.i.d.,
    long = independent and identically distributed,
    list = Independent and Identically Distributed,
    tag = abbrev
}

\DeclareAcronym{JP}{
    short = JP,
    long = joint processing,
    list = Joint Processing,
    tag = abbrev
}

\DeclareAcronym{KKT}{
    short = KKT,
    long = Karush-Kuhn-Tucker,
    tag = abbrev
}

\DeclareAcronym{LOS}{
	short = LOS,
	long = line-of-sight,
	list = Line-of-Sight,
	tag = abbrev
}

\DeclareAcronym{LS}{
    short = LS,
    long = least squares,
    list = Least Squares,
    tag = abbrev
}

\DeclareAcronym{LTE}{
    short = LTE,
    long = Long Term Evolution,
    tag = abbrev
}

\DeclareAcronym{LTE-A}{
    short = LTE-A,
    long = Long Term Evolution Advanced,
    tag = abbrev
}

\DeclareAcronym{MIMO}{
    short = MIMO,
    long = multiple-input multiple-output,
    list = Multiple-Input Multiple-Output,
    tag = abbrev
}

\DeclareAcronym{MISO}{
    short = MISO,
    long = multiple-input single-output,
    list = Multiple-Input Single-Output,
    tag = abbrev
}

\DeclareAcronym{MSE}{
    short = MSE,
    long = mean-squared error,
    list = Mean-Squared Error,
    tag = abbrev
}

\DeclareAcronym{MMSE}{
    short = MMSE,
    long = minimum mean-squared error,
    list = Minimum Mean-Squared Error,
    tag = abbrev
}

\DeclareAcronym{mmWave}{
	short = mmWave,
	long = millimeter wave,
	list = Millimeter Wave,
	tag = abbrev
}

\DeclareAcronym{MU-MIMO}{
    short = MU-MIMO,
    long = multi-user \ac{MIMO},
    list = Multi-User \ac{MIMO},
    tag = abbrev
}

\DeclareAcronym{OTA}{
    short = OTA,
    long = over-the-air,
    list = Over-the-Air,
    tag = abbrev
}

\DeclareAcronym{PSD}{
    short = PSD,
    long = positive semidefinite,
    list = Positive Semidefinite,
    tag = abbrev
}

\DeclareAcronym{QoS}{
	short = QoS,
	long = quality of service,
	list = Quality of Service,
	tag = abbrev
}

\DeclareAcronym{RCP}{
	short = RCP,
	long = remote central processor,
	list = Remote Central Processor,
	tag = abbrev
}

\DeclareAcronym{RRH}{
    short = RRH,
    long = remote radio head,
    list = Remote Radio Head,
    tag = abbrev
}

\DeclareAcronym{RSSI}{
    short = RSSI,
    long = received signal strength indicator,
    list = Received Signal Strength Indicator,
    tag = abbrev
}

\DeclareAcronym{RX}{
	short = RX,
	long = receiver,
	list = Receiver,
	tag = abbrev
}

\DeclareAcronym{SCA}{
    short = SCA,
    long = successive convex approximation,
    list = Successive Convex Approximation,
    tag = abbrev
}

\DeclareAcronym{SG}{
    short = SG,
    long = stochastic gradient,
    list = Stochastic Gradient,
    tag = abbrev
}

\DeclareAcronym{SNR}{
    short = SNR,
    long = signal-to-noise ratio,
    list = Signal-to-Noise Ratio,
    tag = abbrev
}

\DeclareAcronym{SINR}{
    short = SINR,
    long = signal-to-interference-plus-noise ratio,
    list = Signal-to-Interference-plus-Noise Ratio,
    tag = abbrev
}

\DeclareAcronym{SOCP}{
	short = SOCP, 
	long = second order cone program,
	list = Second Order Cone Program,
	tag = abbrev
}

\DeclareAcronym{SSE}{
    short = SSE,
    long = stream specific estimation,
    list = Stream Specific Estimation,
    tag = abbrev
}

\DeclareAcronym{SVD}{
	short = SVD,
	long = singular value decomposition,
	list = Singular Value Decomposition,
	tag = abbrev
}

\DeclareAcronym{TDD}{
	short = TDD,
	long = time division duplex,
	list = Time Division Duplex,
	tag = abbrev
}

\DeclareAcronym{TX}{
	short = TX,
	long = transmitter,
	list = Transmitter,
	tag = abbrev
}

\DeclareAcronym{UE}{
    short = UE,
    long = user equipment,
    list = User Equipment,
    tag = abbrev
}

\DeclareAcronym{UL}{
    short = UL,
    long = uplink,
    list = Uplink,
    tag = abbrev
}

\DeclareAcronym{ULA}{
	short = ULA,
	long = uniform linear array,
	list = Uniform Linear Array,
	tag = abbrev
}

\DeclareAcronym{UPA}{
    short = UPA,
    long = uniform planar array,
    list = Uniform Planar Array,
    tag = abbrev
}

\DeclareAcronym{WMMSE}{
    short = WMMSE,
    long = weighted minimum mean-squared error,
    list = Weighted Minimum Mean-Squared Error,
    tag = abbrev
}

\DeclareAcronym{WMSEMin}{
    short = WMSEMin,
    long = weighted sum \ac{MSE} minimization,
    list = Weighted sum \ac{MSE} Minimization,
    tag = abbrev
}

\DeclareAcronym{WBAN}{
	short = WBAN,
	long = wireless body area network,
	list = Wireless Body Area Network,
	tag = abbrev
}

\DeclareAcronym{WSRMax}{
    short = WSRMax,
    long = weighted sum rate maximization,
    list = Weighted Sum Rate Maximization,
    tag = abbrev
}

\setlength {\marginparwidth }{2cm} 
\usepackage{cleveref}

\begin{document}

\title{Achievable DoF Bounds for Cache-Aided Asymmetric MIMO Communications}


\author{\IEEEauthorblockN{Mohammad NaseriTehrani, MohammadJavad Salehi, and Antti T\"olli
}
\IEEEauthorblockA{
    Centre for Wireless Communications, University of Oulu, 90570 Oulu, Finland \\
    \textrm{E-mail: \{firstname.lastname\}@oulu.fi}}

\thanks{
This work was supported by Infotech Oulu and by the Research Council of Finland under grants no. 343586 (CAMAIDE) and 346208 (6G Flagship).}

}



\maketitle


\begin{abstract}

Integrating coded caching (CC) into multiple-input multiple-output (MIMO) communications can significantly enhance the achievable degrees of freedom (DoF) in wireless networks. This paper investigates a practical cache-aided asymmetric MIMO configuration with cache ratio $\gamma$, where a server equipped with $L$ transmit antennas communicates with $K$ users, each having $G_k$ receive antennas. We propose three content-aware MIMO-CC strategies: the \emph{min-G} scheme, which treats the system as symmetric by assuming all users have the same number of antennas, equal to the smallest among them; the \emph{Grouping} scheme, which maximizes spatial multiplexing gain separately within each user subset at the cost of some global caching gain; and the \emph{Phantom} scheme, which dynamically redistributes spatial resources using virtual or ``phantom'' antennas at the users, bridging the performance gains of the min-$G$ and Grouping schemes. These strategies jointly optimize the number of users, $\Omega$, and the parallel streams decoded by each user, $\beta_k$, ensuring linear decodability for all target users. Analytical and numerical results confirm that the proposed schemes achieve significant DoF improvements across various system configurations. 
\end{abstract}

\begin{IEEEkeywords}
\noindent coded caching, multicasting, MIMO communications, Degrees of freedom
\end{IEEEkeywords}

\section{Introduction}

Increasing demand for multimedia content, driven by applications like immersive viewing and extended reality (XR), is leading to continuous growth in mobile data traffic. Coded caching (CC)~\cite{maddah2014fundamental} has emerged as an effective solution, utilizing the onboard memory of network devices as a communication resource, especially beneficial for cacheable multimedia content.
The performance gain in CC arises from multicasting well-constructed codewords to user groups of size $t\!+1$, where the CC gain $t$ is proportional to the cumulative cache size of all users. 
Originally designed for single-input single-output (SISO) setups~\cite{maddah2014fundamental}, CC was later shown to be effective in multiple-input single-output (MISO) systems, demonstrating that spatial multiplexing and coded caching gains are additive~\cite{shariatpanahi2018physical}.
This is achieved by serving 
multiple groups of users simultaneously with multiple multicast messages and suppressing the intra-group interference by beamforming.
Accordingly, in a MISO-CC setting with $L$ Tx antennas, $t\!+ L$ users can be served in parallel, and the so-called degree-of-freedom (DoF) of $t+\!L$ is achievable~\cite{shariatpanahi2016multi,shariatpanahi2018physical,lampiris2021resolving,salehi2020lowcomplexity}. 

While MISO-CC has been well-studied in the literature, applying CC in multiple-input multiple-output (MIMO) setups has received less attention. In~\cite{cao2017fundamental}, the optimal DoF of cache-aided MIMO networks with three transmitters and three receivers were studied, and in~\cite{cao2019treating}, general message sets were used to introduce inner and outer bounds on the achievable DoF of MIMO-CC schemes. More recently, we studied low-complexity MIMO-CC schemes for single-transmitter setups in~\cite{salehi2021MIMO}, and showed that with $G$ antennas at each receiver, if $\frac{L}{G}$ is an integer, the single-shot DoF of $Gt+L$ is achievable. 
In~\cite{naseritehrani2024multicast, tehrani2024enhanced,naseritehrani2024cacheaidedmimocommunicationsdof}, we proposed an improved achievable single-shot DoF bound for MIMO-CC systems, going beyond the DoF of $Gt+L$ explored in~\cite{salehi2021MIMO}. We also designed a high-performance transmission strategy for MIMO-CC setups in~\cite{naseritehrani2024multicast, naseritehrani2024cacheaidedmimocommunicationsdof} by formulating the problem of maximizing the symmetric rate w.r.t transmit covariance matrices of the multicast signals. 
All aforementioned works, however, assume the same number of antennas at each receiver, which may not be the case in practice. For example, devices in 5G NR, ranging from high-performance smartphones to low-power IoT devices, are designed to meet diverse requirements--such as latency, data rates, and QoE--resulting in category-specific asymmetric antenna configurations and capabilities~\cite{dahlman20205g,salehi2022enhancing}.
Despite its practical significance, the achievable single-shot DoF for scenarios where users are equipped with asymmetric antennas has remained unexplored in the MIMO-CC literature. 

To address this gap, we introduce three delivery schemes aimed at enhancing the achievable DoF in asymmetric MIMO-CC communication scenarios. First, we consider the \emph{min-G} scheme, which treats the system as a symmetric setup, assuming all users have the same number of antennas, equal to the minimum among them. 
Next, we introduce the \emph{grouping} strategy, which divides users into subsets based on their number of antennas and serves each subset in orthogonal dimensions. 
These two strategies establish a trade-off: the {min-$G$} scheme prioritizes a larger cumulative cache size at the expense of some spatial multiplexing capability, whereas the {grouping} scheme maximizes spatial multiplexing gain across all user subsets while compromising some of the global caching gain. Moreover, the {min-$G$} and grouping schemes are complementary, each enhancing performance within specific ranges of design parameters and network configurations. 
To bridge this trade-off and unify the benefits of both strategies, we propose the \emph{phantom} MIMO-CC scheme, which extends the achievable DoF of the min-$G$ scheme across a broader range of design and network parameters by introducing virtual or ``phantom'' antennas at the users, dynamically reconfiguring spatial resources to fine-tune user counts and received streams, while accommodating asymmetric transmissions within each interval.
 This scheme surpasses the DoF performance of the min-$G$ strategy 
while complementing the grouping scheme's performance in specific scenarios.
We provide closed-form expressions of the attainable DoF of all the proposed schemes, and compare their DoF values through numerical results. 
An extended version of this paper with detailed examples of proposed delivery schemes is available in~\cite{naseritehrani2025achievabledofboundscacheaided}.

\emph{Notations.} Bold upper- and lower-case letters signify matrices and vectors, respectively. Calligraphic letters denote sets, ${|\CK|}$ denotes set size of $\CK$, ${\CK \backslash \CT}$ represents elements in $\CK$ excluding those in ${\CT}$. 

\section{System Model}
\label{section:sys_model}
We consider a MIMO setup similar to Figure~\ref{fig:ISIT_sysm}, where a single BS with \( L \) transmit antennas serves \( K \) cache-enabled multi-antenna users. Each user~$k \in [K]$ has $G_k$ receive antennas,\footnote{In fact, \(L\) and 
\(G_k\)
denote attainable spatial multiplexing gains, upper-bounded by the physical antenna counts, channel ranks, and RF chain limits; ``antenna count'' is used for simplicity.
}  
%
and a cache memory of size $MF$ data bits. The users request files from a library $\CF$ of $N$ files, each with the size of $F$ bits. Consequently, the cache ratio at each user is defined as $\gamma = \frac{M}{N}$. 

\begin{figure}[t]
        \centering
        \hspace{-.2cm}\includegraphics[height = 4.5cm]{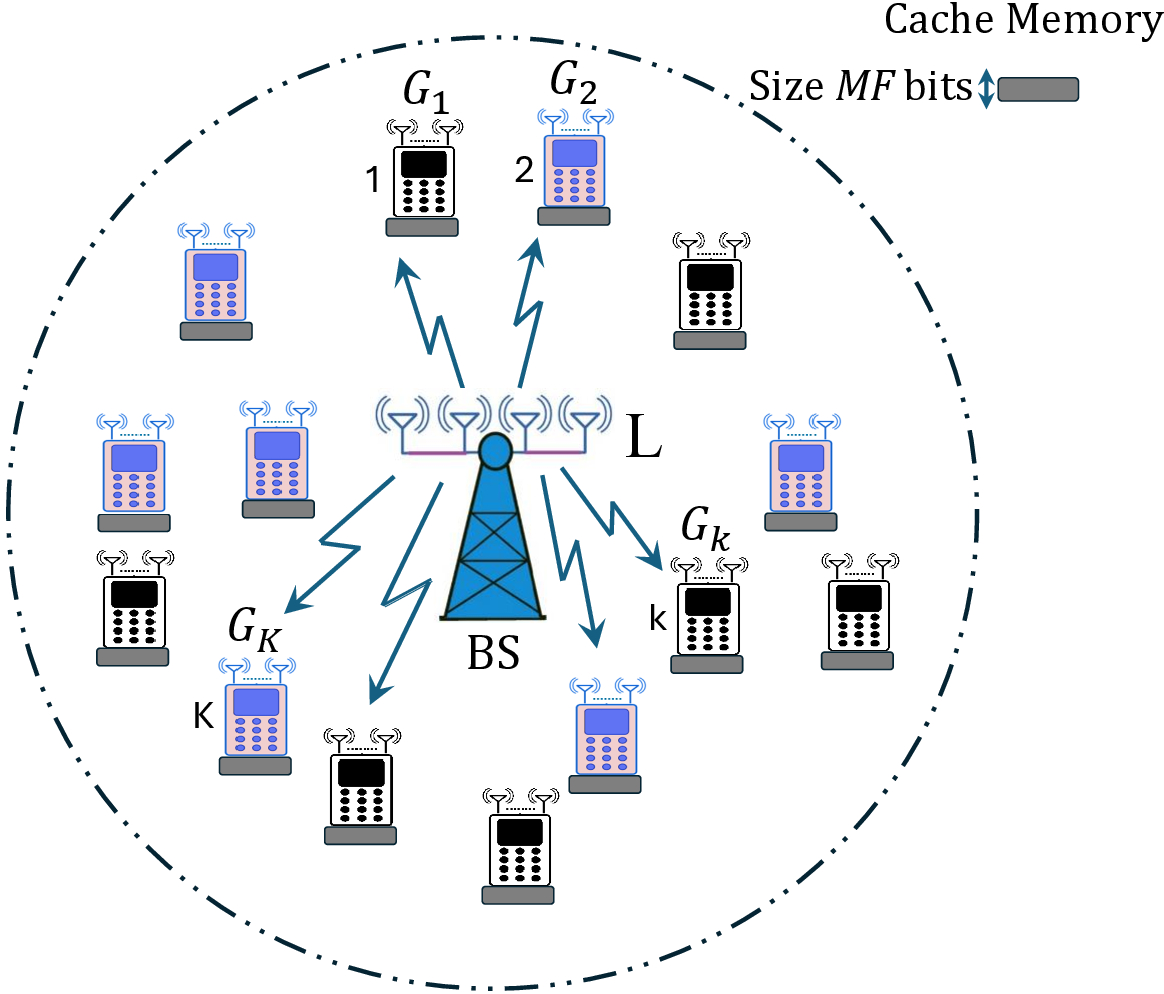} \label{fig:receiver_Prop}
        \vspace{-2mm}
    \caption{Asymmetric MIMO-CC downlink serving a target UE set \(\CK\) of size \(K\), where each UE has the same cache size \(MF\) bits. Different colors represent UEs with varying capabilities.}\label{fig:ISIT_sysm}
\end{figure}

The system operation consists of two phases: cache placement and delivery. In the placement phase, each file in the library is split into a number of smaller \textit{subfiles}, and the cache memory of each user is filled with a subset of subfiles, 
depending on the particular delivery scheme in use, as described in Section~\ref{sec: Proposed schemes}. 
At the beginning of the delivery phase, each user \(k\) reveals its requested file 
to the server.
%
The server then constructs and transmits a set of transmission vectors \(\Bx(s) \in \mathbb{C}^L\), \(s \in [S]\), e.g., during $S$ consecutive time slots, where $S$ is given by the selected delivery scheme. Each transmission vector $\Bx(s)$ delivers parts of the requested data to every user in a subset \(\CK(s) \subseteq \CK\) of target users, where \(|\CK(s)| = \Omega(s)\), and
\(\Omega(s)\) values are chosen to maximize a performance metric. 
Depending on the delivery algorithm, in each transmission, each subfile may be split into a different number of smaller \emph{subpackets} to ensure delivering new data in each transmission vector.
We use $f(s)$ to represent the size of a single stream (equivalently, the size of a single subpacket) delivered during transmission $s$.

In general, the sum rate for delivering the missing $F(1-\gamma)$ bits to all $K$ users can be defined as
$R_{\mathrm{total}} = \frac{K F (1-\gamma)}{T_{\mathrm{total}}}$
bits per second (bps), where $T_{\mathrm{total}}$ is the total delivery time. 
Let us use
$\beta_k(s)$ to denote the number of received parallel streams, and $R_{k,l}(s)$ to denote the achievable rate of stream $l \in [\beta_k]$ of user $k$ in transmission $s$. Then, the delivery time for transmission $s$ is given as $T(s) =  \frac{\sum_{k \in \CK(s)} \beta_k f(s)}{\sum_{k \in \CK(s)} \sum_{l=1}^{\beta_k} R_{k,l}(s)}$, and 
\begin{equation}
    \begin{array}{l}
    T_{\mathrm{total}} = \sum_{s \in [S]} T(s) = \sum_{s \in [S]} \frac{\sum_{k \in \CK(s)} \beta_k f(s)}{\sum_{k \in \CK(s)} \sum_{l=1}^{\beta_k} R_{k,l}(s)}.
    \end{array}
\end{equation}
Clearly, $R_{k,l}(s)$, $T_{\mathrm{total}}$, and $R_{\mathrm{total}}$ are all functions of the transmission SNR.
%
In this paper, we are interested in characterizing the achievable spatial degrees of freedom~(DoF), defined as:
\begin{equation}\label{eq:DoF_ideal}
\begin{array}{l}
    \mathrm{DoF} = \lim\limits_{\textrm{SNR}\rightarrow\infty}\frac{{R}_{\textrm{total}}}{\log \textrm{SNR}} = \frac{KF(1-\gamma)}{\lim\limits_{\textrm{SNR}\rightarrow\infty} \log \textrm{SNR} \times \lim\limits_{\textrm{SNR}\rightarrow\infty} T_{\textrm{total}}
    }.
\end{array}
\end{equation}
Assuming that inter-stream interference can be completely removed by zero-forcing~(ZF) precoders, as $\textrm{SNR} \to \infty$, each stream can be delivered at a rate approaching $\lim_{\textrm{SNR} \to \infty} R_{k,l}(s) = C \coloneq \lim_{\textrm{SNR}\rightarrow\infty} \log(\textrm{SNR})$, for all $s \in [S]$, $k \in \CK(s)$, and $l \in [\beta_k]$. Hence, $\lim_{\textrm{SNR}\rightarrow\infty} T_{\textrm{total}} = \sum_{s \in [S]} \frac{f(s)}{C}$. Substituting this into~\eqref{eq:DoF_ideal} and simplifying, we get:
\begin{equation}
    \begin{array}{l}\label{eq:DoF_ideal1}
    \mathrm{DoF} = \frac{K F(1-\gamma)}{\sum_{s\in [S]}f(s)}.
    \end{array}
\end{equation}
Now, as each user $k \in \CK(s)$ receives $\beta_k(s)$ parallel streams in transmission $s$, and all the missing subpackets of users are delivered after all $S$ transmissions, we have $KF (1-\gamma) = \sum_{s \in [S]} f(s) 
\sum_{k \in \CK(s)} \beta_k(s)$, and~\eqref{eq:DoF_ideal1} can be re-written as
\begin{equation}\label{eq:asym_DoF_phantom_version}
\begin{array}{l}
    \mathrm{DoF} = \frac{\sum_{s \in [S]} f(s) \cdot \sum_{k \in \CK(s)}\beta_k(s)}{\sum_{s \in [S]} f(s)}.
\end{array}
\end{equation}
Depending on the particular delivery scheme, all the users may receive the same number of streams per transmission, i.e., $\beta_k(s) = \beta(s)$ for all $k \in \CK(s)$. In this case, equation~\eqref{eq:asym_DoF_phantom_version}
can be further simplified to
\begin{equation}\label{eq:asym_DoF}
\begin{array}{l}
    \mathrm{DoF} = \frac{\sum_{s \in [S]} f(s) \cdot \beta(s) \cdot \Omega(s)}{\sum_{s \in [S]} f(s)},
\end{array}
\end{equation}
which can also be interpreted as the weighted sum (by the factor of subpacket length $f(s)$) of per-transmission DoF values (total number of parallel streams, i.e., $\beta(s)\Omega(s)$).

\begin{rem}
\label{section:DoF}
To the best of our knowledge, the best DoF bound for any general MIMO-CC setup with a symmetric number of $G$ antennas at all users is given in~\cite{naseritehrani2024cacheaidedmimocommunicationsdof}. The algorithm presented therein assumes the same number of users $\Omega$ served in each transmission, the same number of parallel streams $\beta$ per user, and an equal size of $f = \frac{1}{\Theta}$ per all streams, where $\Theta$ denotes the final subpacketization level. As a result, the DoF relation in~\eqref{eq:asym_DoF} is simplified to $\mathrm{DoF_{\textrm{sym}}}(\Omega,\beta)=\Omega\cdot \beta$, and $\Omega^*$ and $\beta^*$ values maximizing $\mathrm{DoF_{\textrm{sym}}}$ are calculated by solving
\begin{equation}\label{eq:total_DoF}
\begin{array}{l}
  (\Omega^*, \beta^*) =  \max_{\Omega,\beta}~ \Omega \cdot \beta, \\[1ex]
    \mathrm{s.t.}\:\:{\beta \le \mathrm{min}\bigg({G}, \frac{L \binom{\Omega-1}{K\cdot \gamma}}{1 + (\Omega - K\cdot \gamma-1)\binom{\Omega-1}{K\cdot \gamma}} \bigg).}
\end{array}
\end{equation}\\
In~\cite{naseritehrani2024cacheaidedmimocommunicationsdof}, it is also shown that for any general $\Omega$ and $\beta$ values satisfying the condition in~\eqref{eq:total_DoF}, one can always design a MIMO-CC scheme where in each transmission, every target user can decode all its $\beta$ parallel streams with a linear receiver. Due to these strong properties, we repeatedly use~\eqref{eq:total_DoF} and the delivery algorithm proposed in~\cite{naseritehrani2024cacheaidedmimocommunicationsdof} throughout the rest of this paper.

\end{rem}

 \color{black}

\section{Asymmetric MIMO-CC Delivery Schemes}
\label{sec: Proposed schemes}
\subsection{\texorpdfstring{min-$G$ Scheme}{min-G Scheme}}

\label{subsec:minG}

The min-$G$ scheme transforms the asymmetric MIMO-CC system into a symmetric setup by considering the effective spatial multiplexing gain of
\begin{equation}\label{eq:min_G}
 \begin{array}{l}
 \check{G} = \min_{k\in\CK} G_{k}
\end{array}
\end{equation}
for all users, and applying the symmetric MIMO-CC scheme of~\cite{naseritehrani2024cacheaidedmimocommunicationsdof} for both placement and delivery phases. The main idea behind this scheme is to maximize the cumulative cache size of users in the CC delivery session, i.e., to maximize the CC gain by trading off the spatial multiplexing capability of a subset of users. 

\textbf{Placement phase.} Each file $W \in \CF$ is split into $\check{\vartheta} = \binom{K}{K \gamma}$ subfiles $W_{\CP}$, where $\CP$ can denote every subset of the users with size $K\gamma$. Then, at the cache memory of each user~$k$, we store $W_{\CP}$ for all $W \in \CF$ and $\CP \ni k$.

\textbf{Delivery phase.} We first find the optimal number of users served per transmission, $\Omega_{\check{G}}^*$, and the optimal number of streams per user per transmission, $\beta_{\check{G}}^*$, by solving
\begin{equation}\label{eq:optimal_omega_beta_minG}
\begin{array}{l}
\Omega_{\check{G}}^*,\beta_{\check{G}}^* = \arg\max_{\Omega,\beta} \Omega \cdot \beta \\
s.t. \qquad {\beta \le \min\bigg({\check{G}}, \frac{L \binom{\Omega-1}{K\gamma}}{1 + (\Omega- K\gamma - 1)\binom{\Omega-1}{K \gamma}} \bigg).}
\end{array}
\end{equation}
Then, we follow the delivery algorithm in the proof of Theorem~1 in~\cite{naseritehrani2024cacheaidedmimocommunicationsdof}, which involves splitting each subfile $W_{\CP}$ into $\check{\phi} = \binom{K-K\gamma-1}{\Omega^*_{\check{G}}-K\gamma-1} \beta^*_{\check{G}}$ subpackets $W_{\CP}^q$, and creating 
\begin{equation}\label{eq:tran_cnt_minG1}
 \begin{array}{l}
 \check{S} = \binom{K}{\Omega^*_{\check{G}}} \binom{\Omega^*_{\check{G}}-1}{K\gamma}
\end{array}
\end{equation}
transmission vectors, each delivering $\beta^*_{\check{G}}$ parallel streams to every user within a target subset of $\Omega^*_{\check{G}}$ users.

\textbf{DoF analysis.} In the $\textrm{min}-G$ scheme, we have $S = \check{S}$, as defined in~\eqref{eq:tran_cnt_minG1}. For each $s \in [\check{S}]$, we have $f(s) = \nicefrac{F}{\displaystyle\check{\vartheta} \check{\phi}}$, $\Omega(s) = \Omega^*_{\check{G}}$, and $\beta(s) = \beta^*_{\check{G}}$. Substituting these values in~\eqref{eq:asym_DoF} and simplifying the equations, the DoF of the min-$G$ scheme is simply given as
\begin{equation}\label{eq:DoF_minG}
 \begin{array}{l}
 \textrm{DoF}^{*}_{\check{G}} = \Omega^*_{\check{G}} \cdot \beta^*_{\check{G}}.
\end{array}
\end{equation}
In Section~\ref{section:Simulations}, we see that despite its simplicity, the min-$G$ scheme provides solid performance for a wide range of network parameters due to its maximal CC gain.



\subsection{Grouping Scheme}
\label{subsec: grouping}

With the grouping scheme, we treat each set of users with varying capabilities differently. 
In this regard, the user set $\CK$ is divided into \( J \) disjoint subsets \(\CK_{(j)}\), $j \in \CJ = [J]$, such that \(|\CK_{(j)}| = K_{(j)}\) (i.e., $\sum_{j \in \CJ} K_{(j)} = K$ and $\bigcup_{j \in \CJ} \CK_{(j)} = \CK$). Each user ${k} \in$ \(\CK_{(j)}\) has \(G_{k} = G_{(j)} \) receive antennas, and $G_{(1)} < G_{(2)}$ $< \cdots < G_{(J)}$.
The main rationale behind this scheme is to maximize the effect of spatial multiplexing gain for each group, albeit with the cost of decreasing the cumulative cache size (i.e., the CC gain).


\textbf{Placement phase.}
Data placement is done in $J$ steps, where at each step $j \in \CJ$, cache memories of users in group $\CK_{(j)}$ are filled with data. More particularly, at each step $j$, every file $W \in \CF$ is first split into $\vartheta_{(j)} = \binom{K_{(j)}}{K_{(j)}\gamma}$ subfiles $W_{\CP}$, where $\CP$ can denote every subset of the user group $\CK_{(j)}$ with size $K_{(j)} \gamma$. Then, each user $k \in \CK_{(j)}$ stores $W_{\CP}$, for all $W \in \CF$ and $\CP \ni k$.

\textbf{Delivery phase.} Data delivery is also performed in $J$ orthogonal delivery intervals. At each interval $j \in \CJ$, all the missing parts of data files requested by users in group $\CK_{(j)}$ are delivered. First, we calculate $\Omega^*_{(j)}$ and $\beta^*_{(j)}$ by solving
\begin{equation}\label{eq:optimal_omega_beta_grouping}
\begin{array}{l}
\Omega_{(j)}^*,\beta_{(j)}^* = \arg\max_{\Omega,\beta} \Omega \cdot \beta \\
s.t. \qquad {\beta \le \min\bigg({G_{(j)}}, \frac{L \binom{\Omega-1}{K_{(j)}\gamma}}{1 + (\Omega- K_{(j)}\gamma - 1)\binom{\Omega-1}{K_{(j)} \gamma}} \bigg).}
\end{array}
\end{equation}
Then, for the users in $\CK_{(j)}$, we follow the delivery algorithm in the proof of Theorem~1 in~\cite{naseritehrani2024cacheaidedmimocommunicationsdof}. This requires splitting each subfile $W_{\CP}$ into ${\phi}_{(j)} = \binom{K_{(j)}-K_{(j)}\gamma-1}{\Omega^*_{(j)}-K_{(j)}\gamma-1} \beta^*_{(j)}$ subpackets $W_{\CP}^q$, and creating 
\begin{equation}\label{eq:tran_cnt_minG}
 \begin{array}{l}
 {S}_{(j)} = \binom{K_{(j)}}{\Omega^*_{(j)}} \binom{\Omega^*_{(j)}-1}{K_{(j)}\gamma}
\end{array}
\end{equation}
transmission vectors, each delivering $\beta^*_{(j)}$ parallel streams to every user within a target subset of $\Omega^*_{(j)}$ users.

\textbf{DoF analysis.} 
Following the delivery algorithm, for each transmission $s$ in step $j \in \CJ$, we have $f(s) = \nicefrac{F}{\displaystyle \vartheta_{(j)} \phi_{(j)}}$, $\Omega(s) = \Omega^*_{(j)}$, and $\beta(s) = \beta^*_{(j)}$.
Also, as the transmission vectors in step $j$ deliver every missing part of all users in $\CK_{(j)}$, we can write
\begin{equation}
    S_{(j)}\Omega^*_{(j)}\beta^*_{(j)} \times \nicefrac{F}{\vartheta_{(j)} \phi_{(j)}} = K_{(j)}(1-\gamma) F.
\end{equation}
Substituting these into~\eqref{eq:asym_DoF}, the DoF of the grouping scheme, denoted by $\mathrm{DoF}_{\CJ}^*$, can be written as
\begin{equation}\label{eq:total_DoF_grouping}
   \hspace{-3mm} \mathrm{DoF}_{\CJ}^* =  \frac{\sum_{j\in \CJ} S_{(j)} \frac{F}{\vartheta_{(j)} \phi_{(j)}}\Omega^*_{(j)}\beta^*_{(j)}
    }{ \sum_{j\in\CJ} S_{(j)} \frac{F}{\vartheta_{(j)} \phi_{(j)}}
    }=  
    \frac{1
    }{ \sum_{j\in\CJ}  \frac{ K_{(j)}/K}{ \Omega^*_{(j)} \beta^*_{(j)}}
    }, 
\end{equation}
\subsection{Phantom Scheme}
\label{subsec: phantom}
Another approach to designing the asymmetric MIMO-CC delivery scheme is the \emph{Phantom} method, which serves as a bridge between the min-$G$ and grouping schemes. This method maximizes the coded caching (CC) gain while leveraging the excess multiplexing gain of users with more antennas than the minimum value, which would otherwise be wasted in the min-$G$ scheme. However, since users with fewer antennas cannot receive as many streams, a portion of their requested data must be delivered through a separate unicasting (UC) phase. 
\textbf{Placement phase.}  Each file $W \in \CF$ is split into $\hat{\vartheta} = \binom{K}{K\gamma}$ packets $W_{\CP}$, where $\CP$ represents all user subsets of size $K\gamma$. Then, user $k \in [K]$ stores a packet $W_{\CP}$ if $\CP\ni k$.

\textbf{Delivery phase.} Data delivery closely follows the delivery algorithm in the proof of Theorem~1 in~\cite{naseritehrani2024cacheaidedmimocommunicationsdof}. However, adjustments are made to make it suitable for the asymmetric case.
 
\noindent\textit{Step 1.} A parameter $G_{(1)} \le \hat{G} \le G_{(J)}$, and two additional parameters $\hat{\Omega}$ and $\hat{\beta}$ are chosen such that 
\begin{equation} \label{DoF_bndd}
\begin{array}{l}
    \hat{\beta} \le \min \left(\hat{G}, \big(L-(\hat{\Omega}-K\gamma-1)\hat{\beta}\big)\binom{\hat{\Omega}-1}{K\gamma}\right).
\end{array}
\end{equation}
According to~\cite{naseritehrani2024cacheaidedmimocommunicationsdof}, this selection guarantees that in the original network setup but now with $\hat{G}$ antennas at each user, a delivery scheme can be designed where each transmission delivers $\hat{\beta}$ parallel streams to every user in a subset of users of size $\hat{\Omega}$.

\noindent\textit{Step 2.}
Consider the original network setup with similar cache placement and user requests, but now with $\hat{G}$ antennas in each user. Following a similar procedure as in~\cite{naseritehrani2024cacheaidedmimocommunicationsdof}, a delivery scheme can be designed with $S_{\textrm{MC}} = \binom{K}{\hat{\Omega}} \binom{\hat{\Omega}-1}{K\gamma}$ transmission vectors $\hat{\Bx}_{\textrm{MC}}(s)$, $s \in [S_{\textrm{MC}}]$, each delivering $\hat{\beta}$ parallel streams to every user~$k$ in a subset $\CK_{\mathrm{MC}}(s)$ of size $\hat{\Omega}$. The delivery algorithm is detailed in~\cite{naseritehrani2024cacheaidedmimocommunicationsdof} and is not repeated here for the sake of brevity. However, in a nutshell, each subfile $W_{\CP}$ is split into $\hat{\phi} = \binom{K-K\gamma-1}{\hat{\Omega}-K\gamma-1} \hat{\beta}$ smaller subpackets $W_{\CP}^q$, and the transmission vectors are built as
\begin{equation}
    \begin{array}{l}\label{sig_model_uc}
        \hat{\Bx}_{\textrm{MC}}(s) = \sum_{k \in \CK_{\mathrm{MC}}(s)}  \sum_{W_{\CP,k}^q \in \hat{\CM}_{k}(s)}   \Bw_{\CP,k}^q W_{\CP,k}^q,
    \end{array}
\end{equation}
where $W_{\CP,k}^q$ denotes the subpacket $W_{\CP}^q$ of the file requested by user~$k$, $\Bw_{\CP,k}^q$ is the ZF transmit beamforming vector nulling out the interference by $W_{\mathcal{P},k}^q$ to every stream decoded by each user $k' \in \CK_{\mathrm{MC}}(s) \backslash (\CP \cup \{k\})$, and $\hat{\CM}_k(s)$ is the set of subpackets $W_{\CP,k}^q$ to be delivered to user~$k$ in transmission $\hat{\Bx}_{\textrm{MC}}(s)$, built such that $\CP \subseteq \CK_{\mathrm{MC}}(s) \backslash \{k\}$ and the number of subpackets with the same subfile index $\CP$ is minimized as much as possible.



\noindent\textit{Step 3.} For every $k \in \CK$, define $\beta_k = \min({G}_k,\hat{\beta})$.
For each $s \in [S_{\textrm{MC}}]$, create a \emph{multicast} transmission vector $\Bx_{\textrm{MC}}(s)$ for the original setup using $\hat{\Bx}_{\textrm{MC}}(s)$ in~\eqref{sig_model_uc} as follows:
\begin{enumerate}
    \item For each $k \in \CK_{\textrm{MC}}(s)$, create $\CM_{k}(s)$ by removing $\hat{\beta} - \beta_k$ subpackets randomly from $\hat{\CM}_k(s)$. Add the removed subpackets to a set $\CM_{\textrm{UC}}$ to be sent later via \emph{unicast} transmissions.
    \item Create $\Bx_{\textrm{MC}}(s)$ as
    \begin{equation}
    \begin{array}{l}\label{sig_model_phantom}
        \hspace{-3mm}\Bx_{\textrm{MC}}(s) = \sum_{k \in \CK_{\mathrm{MC}}(s)}  \sum_{W_{\CP,k}^q \in {\CM}_{k}(s)}   \Bw_{\CP,k}^q W_{\CP,k}^q.
    \end{array}
\end{equation}
\end{enumerate}
We denote the total number of missing subpackets by $Z$ and the total number of subpackets delivered via multicasting and unicasting by \(Z_{\textrm{MC}}\) and \(Z_{\textrm{UC}}\), respectively.


\noindent \textit{Step 4.}  
Transmit all the remaining subpackets in \(\CM_{\textrm{UC}}\) using $S_{\textrm{UC}}$ unicast transmissions $\Bx_{\textrm{UC}}(s)$, $s \in [S_{\textrm{UC}}]$. Let us use $d(s)$ to denote the number of subpackets delivered with $\Bx_{\textrm{UC}}(s)$. Successful data delivery requires that $\sum_{s \in [S_\textrm{UC}]} d(s) = |\CM_{\textrm{UC}}| = Z_{\textrm{UC}}$. Clearly, we are interested in increasing each $d(s)$ (and decreasing $S_{\textrm{UC}}$) as much as possible. In principle, for any arbitrary subset $\Bar{\CK}$ of users, the number of parallel streams is limited by $\min (L, \sum\nolimits_{k \in \Bar{\CK}} G_k )$. So, the maximum value of $d(s)$ depends on the user indices of the subpackets transmitted by $\Bx_{\textrm{UC}}(s)$. For the sake of simplicity, in this paper, we assume that the subpackets in $\CM_{\textrm{UC}}$ have diverse user indices, such that there exists a perfect partitioning of $\CM_{\textrm{UC}}$ allowing $d(s) = L$ for all $s \in [S_{\textrm{UC}}]$.\footnote{Relaxing this assumption increases the number of unicast intervals, and reduces the DoF of the phantom scheme. However, as most of the packets are delivered via multicasting, the negative effect is marginal.}

\begin{theorem}\label{Th:Linear_Dec}
The transmission vectors $\Bx_{\textrm{MC}}(s)$ in~\eqref{sig_model_phantom} are linearly decodable by each user $k \in \CK_{\textrm{MC}}(s)$.
\end{theorem}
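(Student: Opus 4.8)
The plan is to reduce the claim to the linear decodability of the symmetric $\hat{G}$-antenna scheme established in~\cite{naseritehrani2024cacheaidedmimocommunicationsdof}, and then argue that the two modifications introduced in Steps~3--4 (reusing the very same zero-forcing beamformers while deleting only the summands that are offloaded to the unicast phase) preserve decodability at each user's real antenna count $G_k$.

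First I would fix a transmission index $s$ and a target user $k \in \CK_{\textrm{MC}}(s)$, and expand the received signal $\BH_k \Bx_{\textrm{MC}}(s)$ using~\eqref{sig_model_phantom}, where $\BH_k \in \mathbb{C}^{G_k \times L}$ is the real channel. I would partition the summands of $\Bx_{\textrm{MC}}(s)$ into three groups according to the subfile index $\CP$ and the intended user $k'$ of each term $\Bw_{\CP,k'}^q W_{\CP,k'}^q$: (i) the \emph{desired} terms with $k'=k$, of which there are exactly $\beta_k=\min(G_k,\hat{\beta})$ after the deletion in Step~3; (ii) the \emph{cache-known} terms with $k\in\CP$, which user~$k$ can reconstruct and subtract since $W_{\CP}$ lies in its cache for every file; and (iii) the remaining terms with $k'\neq k$ and $k\notin\CP$, i.e.\ $k\in\CK_{\textrm{MC}}(s)\setminus(\CP\cup\{k'\})$, which the beamformers $\Bw_{\CP,k'}^q$ are constructed to null. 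After subtracting group~(ii) and invoking the nulling of group~(iii), the effective observation collapses to a superposition of only the $\beta_k$ desired streams.

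The key structural observation is that the modification is purely subtractive: Step~3 removes summands from the decodable vector $\hat{\Bx}_{\textrm{MC}}(s)$ of~\eqref{sig_model_uc} without altering any beamformer. Hence every nulling identity and every cache cancellation that held for the symmetric scheme of~\cite{naseritehrani2024cacheaidedmimocommunicationsdof} continues to hold term by term; deleting a summand intended for some user $k'$ with $G_{k'}<\hat{\beta}$ only removes, from user $k$'s vantage, a term that already belonged to group~(ii) or~(iii) and was therefore harmless. It then remains to certify that the residual $\beta_k$ desired streams are linearly separable with $G_k$ antennas. I would do this by reconciling the phantom design count $\hat{G}$ with the real count $G_k$: when $G_k\ge\hat{G}$ one reuses the virtual receive combiner on a set of $\hat{G}$ physical antennas, so the group-(iii) nulling transfers directly to the real decoding subspace; when $G_k<\hat{\beta}$ the load is reduced to $\beta_k=G_k$, and the $\hat{\beta}-G_k$ surplus streams---precisely those that would have demanded the nonexistent phantom antennas---are exactly the ones offloaded to $\CM_{\textrm{UC}}$ in Step~4. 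In every case $\beta_k\le G_k$, so the effective channels $\{\BH_k\Bw_{\CP,k}^q : W_{\CP,k}^q\in\CM_k(s)\}$ are $\beta_k$ generic vectors in $\mathbb{C}^{G_k}$ and hence linearly independent with probability one, so a zero-forcing combiner recovers all $\beta_k$ streams.

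The main obstacle I anticipate is exactly this last reconciliation between the single phantom count $\hat{G}$ used to synthesize the beamformers and the heterogeneous real counts $G_k$: one must show that nulling performed with respect to each user's $\hat{\beta}$-dimensional \emph{virtual} decoding subspace still implies nulling with respect to a $\beta_k$-dimensional \emph{real} decoding subspace attainable with $G_k$ physical antennas, including the intermediate regime $\hat{\beta}\le G_k<\hat{G}$. Making this precise is what forces the choice $\beta_k=\min(G_k,\hat{\beta})$ together with the unicast offloading of the surplus, and it is the step where the feasibility guaranteed by the constraint~\eqref{DoF_bndd} for the underlying symmetric nulling is indispensable.
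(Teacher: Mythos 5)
There is a genuine gap, and it sits exactly at the step your proof leans on hardest: the claim that Step~3 is ``purely subtractive,'' so that every nulling identity of the symmetric $\hat{G}$-antenna scheme ``continues to hold term by term'' with unaltered beamformers. In the virtual scheme, a beamformer $\Bw_{\CP,k}^q$ nulls interference only in the \emph{virtual} receive subspaces, i.e., $\hat{\BU}_{k'}\herm\hat{\BH}_{k'}\Bw_{\CP,k}^q = 0$, where $\hat{\BH}_{k'}$ has $\hat{G}$ rows, some of which are phantom antennas that do not physically exist for a user with $G_{k'}<\hat{G}$. This says nothing about the physically received interference $\BH_{k'}\Bw_{\CP,k}^q\in\mathbb{C}^{G_{k'}}$, which is generically nonzero. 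For the implication ``virtual nulling $\Rightarrow$ real nulling'' one would need the real decoding subspace $\mathrm{span}(\BH_{k'}\herm\BU_{k'})$ to be contained in the virtually nulled subspace $\mathrm{span}(\hat{\BH}_{k'}\herm\hat{\BU}_{k'})$; generically the intersection of the real row space (dimension $G_{k'}$) with that $\hat{\beta}$-dimensional subspace has dimension $\max(0,\,G_{k'}+\hat{\beta}-\hat{G})<\beta_{k'}$, so no real combiner achieving the transfer exists. Since the number of interfering streams in $\Bx_{\textrm{MC}}(s)$ typically exceeds $G_{k'}$, the real receiver cannot cancel this residual interference on its own, and your decomposition into groups (i)--(iii) collapses. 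You do flag this reconciliation as ``the main obstacle,'' but the text offered to resolve it (reuse the virtual combiner when $G_k\ge\hat{G}$, offload when $G_k<\hat{\beta}$) restates the difficulty rather than resolving it, and is silent on the regime where \emph{other} users in $\CK_{\textrm{MC}}(s)$ have $G_{k'}<\hat{G}$.

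The paper's proof takes the opposite route precisely to avoid this: the beamformers of $\Bx_{\textrm{MC}}(s)$ are \emph{not} reused from $\hat{\Bx}_{\textrm{MC}}(s)$; they are required to satisfy $\Bw_{\CP,k}^q \in \mathrm{Null}(\bar{\BH}_{\CP,k})$ where the equivalent interference channel $\bar{\BH}_{\CP,k}=[\BH_{k'}\herm\BU_{k'}]\herm$, $k'\in\CK_{\textrm{MC}}(s)\backslash(\CP\cup\{k\})$, is built from the \emph{real} channels and real combiners $\BU_{k'}\in\mathbb{C}^{G_{k'}\times\beta_{k'}}$. What is inherited from the virtual scheme is only combinatorial: which subpackets appear, and the bound $\lceil \hat{\beta}/\binom{\hat{\Omega}-1}{K\gamma}\rceil$ on how many transmitted subpackets of user $k$ share the same subfile index $\CP$ (these share the same nullspace and therefore need that many linearly independent beamformers inside it). Feasibility then follows from rank--nullity: $\mathrm{nullity}(\bar{\BH}_{\CP,k}) = L-\sum_{k'}\beta_{k'} \ge L-(\hat{\Omega}-K\gamma-1)\hat{\beta}$ because $\beta_{k'}\le\hat{\beta}$, so condition~\eqref{DoF_bndd} guarantees enough dimensions. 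Note that the ``subtraction'' enters the paper's argument only through this inequality (removing streams shrinks the interference dimension, making zero-forcing easier), and that this same-subfile-index dimension count is also the piece missing from your final genericity claim, which asserts linear independence of the $\beta_k$ desired effective channels without accounting for the fact that several of them must coexist inside one common nullspace.
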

\begin{proof}
The proof is based on how $\Bx_{\textrm{MC}}(s)$ is built by removing \(\hat{\beta} - \beta_k\) subpackets of each user from the reference, linearly decodable transmission vector $\hat{\Bx}_{\textrm{MC}}(s)$ for the equivalent symmetric MIMO-CC network setup. Due to lack of space, it is relegated to the extended version of this paper~\cite{naseritehrani2025achievabledofboundscacheaided}. \end{proof}

\textbf{DoF analysis of the phantom scheme.}

 \begin{lemma}
     The achievable DoF of the phantom scheme could be calculated as the total number of subpackets delivered over both multicast and unicast transmissions, normalized by the total number of transmissions. In other words,
\begin{equation}
\begin{array}{l}\label{Phantom_delivery61}
     \textrm{DoF}_{\hat{G}} = \displaystyle
     \frac{
     Z_{\textrm{MC}}+Z_{\textrm{UC}}}{%
     S_{\textrm{MC}}+S_{\textrm{UC}}}. \qquad \mathrm{ } \\[0mm]
\end{array}
\end{equation}
 \end{lemma}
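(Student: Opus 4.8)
The plan is to specialise the general DoF expression~\eqref{eq:asym_DoF_phantom_version} to the phantom scheme, exploiting the fact that every delivered stream---multicast or unicast---carries a subpacket of the \emph{same} size. First I would note that in both Step~2 and Step~3 each file is split into $\hat{\vartheta}\hat{\phi}$ equal subpackets, so $f(s)=f\coloneq F/(\hat{\vartheta}\hat{\phi})$ is constant over all $s\in[S]$. Crucially, this constant value persists into the unicast slots of Step~4, because the unicast subpackets collected in $\CM_{\textrm{UC}}$ are exactly the subpackets removed from the multicast sets $\hat{\CM}_k(s)$ and therefore inherit the size $f$.

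Next I would argue that $\sum_{k\in\CK(s)}\beta_k(s)$ equals the number of subpackets conveyed in transmission $s$. In a multicast transmission, user $k$ decodes $\beta_k=\min(G_k,\hat{\beta})$ streams, each a single subpacket; Theorem~\ref{Th:Linear_Dec} guarantees linear decodability, so every stream attains rate $C$ as $\textrm{SNR}\to\infty$. Summing $\sum_{k}\beta_k(s)$ over the $S_{\textrm{MC}}$ multicast slots therefore counts $Z_{\textrm{MC}}$ subpackets. In a unicast transmission, $d(s)=L$ streams are sent, each a single subpacket that is ZF-decodable at rate $C$ since the stream count does not exceed $L$; summing over the $S_{\textrm{UC}}$ slots counts $\sum_s d(s)=Z_{\textrm{UC}}$ subpackets. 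Hence $\sum_{s\in[S]}\sum_{k\in\CK(s)}\beta_k(s)=Z_{\textrm{MC}}+Z_{\textrm{UC}}$.

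Finally I would substitute into~\eqref{eq:asym_DoF_phantom_version}. Because $f(s)=f$ factors out of numerator and denominator alike, the numerator becomes $f\,(Z_{\textrm{MC}}+Z_{\textrm{UC}})$ and the denominator $\sum_{s\in[S]}f(s)=f\,(S_{\textrm{MC}}+S_{\textrm{UC}})$; cancelling $f$ yields the claimed identity. The step I expect to require the most care---rather than genuine difficulty---is the bookkeeping of Step~3: one must verify that each removed subpacket is unicast exactly once (so that there is no double counting and $Z=Z_{\textrm{MC}}+Z_{\textrm{UC}}$) and that the constant-$f$ property genuinely extends to the unicast phase. Both facts follow directly from the construction, but they must be invoked explicitly for the cancellation of $f$ to be legitimate.
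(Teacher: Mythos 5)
Your proposal is correct and follows essentially the same route as the paper's own proof: observe that $f(s)=F/(\hat{\vartheta}\hat{\phi})$ is constant across both multicast and unicast transmissions, substitute into~\eqref{eq:asym_DoF_phantom_version} so that $f$ cancels, and identify the multicast and unicast stream counts in the numerator with $Z_{\textrm{MC}}$ and $Z_{\textrm{UC}}$. The only cosmetic difference is that you invoke $d(s)=L$ and rate-$C$ decodability explicitly, whereas the paper's proof needs only $\sum_{s} d(s)=Z_{\textrm{UC}}$; this does not change the argument.
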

 \begin{proof}
According to the delivery algorithm, the size of each subpacket (and so, the value of $f(s)$) is equal to $\nicefrac{F}{\hat{\phi}\hat{\vartheta}}$ during both multicast and unicast transmissions. Using this in~\eqref{eq:asym_DoF_phantom_version} and noting that $S = S_{\textrm{MC}} + S_{\textrm{UC}}$, we can write
\begin{equation}\label{eq:asym_DoF_phantom_version2} 
\begin{array}{l}
    \hspace{-4mm}\textrm{DoF}_{\hat{G}} =\displaystyle \frac{ \sum\nolimits_{s \in [S_{\textrm{MC}}]} \sum\nolimits_{k \in \CK_{\textrm{MC}}(s)}\beta_k(s)+\sum\nolimits_{s \in [S_{\textrm{UC}}]} d(s)}{S_{\textrm{MC}}+S_{\textrm{UC}}}.
\end{array}
\end{equation}
However, the first and the second summations in the numerator of $\textrm{DoF}_{\hat{G}}$ represent the total number of subpackets in multicast and unicast transmissions, which are equal to $Z_{\textrm{MC}}$ and $Z_{\textrm{UC}}$, respectively, and the proof is complete.
 \end{proof}
\begin{theorem}\label{Th:Phantom} 
Assume the user set $\CK$ is divided into \( J \) disjoint subsets \(\CK_{(j)}\), $j \in \CJ = [J]$, such that \(|\CK_{(j)}| = K_{(j)}\), each user ${k} \in$ \(\CK_{(j)}\) has \(G_{k} = G_{(j)} \) receive antennas, and $G_{(1)} < G_{(2)}$ $< \cdots < G_{(J)}$. Also, the parameters $\hat{\beta}$ and $\hat{\Omega}$ of multicast transmissions satisfy the linear decodability condition in~\eqref{DoF_bndd}, and for all unicast transmissions, we have $d(s) = L$. Then, the DoF of the proposed phantom scheme is given by
\begin{equation}
\begin{array}{l}\label{Phantom_delivery0}
    \hspace{-3mm} \textrm{DoF}_{\hat{G}}= \displaystyle 
     \frac{\displaystyle    \hat{\Omega} \times \hat{\beta}}{\displaystyle  1 + \frac{
     \hat{\Omega}}{\displaystyle K L
} \sum\nolimits_{j\in{
\hat{\CJ}}} \big((\hat{\beta} - \beta_{(j)}) K_{(j)}\big) },  \\[3mm]
\end{array}
\end{equation}
where \(\hat{\CJ} \!= \{ j \! \in \!\CJ \!\mid\! G_{(j)} \!< \!\hat{G} \}\), and ${\beta}_{(j)}\! = \! \beta_k$ for any $k\! \in\! \CK_{(j)}$.
\end{theorem}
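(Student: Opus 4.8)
The plan is to evaluate the Lemma's expression $\textrm{DoF}_{\hat{G}} = (Z_{\textrm{MC}} + Z_{\textrm{UC}})/(S_{\textrm{MC}} + S_{\textrm{UC}})$ by computing each of the four quantities in closed form. The count $S_{\textrm{MC}} = \binom{K}{\hat{\Omega}}\binom{\hat{\Omega}-1}{K\gamma}$ is already fixed by Step~2 of the construction, so the real work is to express $Z_{\textrm{MC}}$, $Z_{\textrm{UC}}$, and $S_{\textrm{UC}}$ in terms of $\hat{\Omega}$, $\hat{\beta}$, and the group sizes $K_{(j)}$.

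First I would establish a counting identity for how often each user acts as a multicast target. Since the target sets $\CK_{\textrm{MC}}(s)$ are drawn from the symmetric combinatorial enumeration of $\hat{\Omega}$-subsets inherited from the reference scheme of~\cite{naseritehrani2024cacheaidedmimocommunicationsdof}, a double-counting argument over the incidences $\{(s,k) : k \in \CK_{\textrm{MC}}(s)\}$ gives a total of $\hat{\Omega}\, S_{\textrm{MC}}$ incidences distributed equally among the $K$ users; hence each user is a target in exactly $\hat{\Omega} S_{\textrm{MC}}/K$ multicast transmissions. In each such transmission the reference scheme assigns user~$k$ precisely $\hat{\beta}$ subpackets (the set $\hat{\CM}_k(s)$), of which Step~3 keeps $\beta_k = \min(G_k,\hat{\beta})$ for multicast and diverts the remaining $\hat{\beta} - \beta_k$ to unicast. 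Summing over all users and over each user's transmissions then yields $Z_{\textrm{MC}} = \tfrac{\hat{\Omega} S_{\textrm{MC}}}{K}\sum_{k\in\CK}\beta_k$ and $Z_{\textrm{UC}} = \tfrac{\hat{\Omega} S_{\textrm{MC}}}{K}\sum_{k\in\CK}(\hat{\beta}-\beta_k)$, so that $Z_{\textrm{MC}} + Z_{\textrm{UC}} = S_{\textrm{MC}}\,\hat{\Omega}\,\hat{\beta}$, as expected since the reference scheme would have delivered $\hat{\beta}$ to every target.

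Next I would group the user-indexed sums by antenna class. Writing $\beta_{(j)} = \min(G_{(j)},\hat{\beta})$, a term $\hat{\beta} - \beta_{(j)}$ vanishes whenever $G_{(j)} \ge \hat{\beta}$; combined with $\hat{\beta} \le \hat{G}$ this forces only classes with $G_{(j)} < \hat{G}$ to contribute, so $\sum_{k\in\CK}(\hat{\beta}-\beta_k) = \sum_{j\in\hat{\CJ}}(\hat{\beta}-\beta_{(j)})K_{(j)}$ with $\hat{\CJ} = \{j : G_{(j)} < \hat{G}\}$. Invoking the standing assumption $d(s) = L$ gives $S_{\textrm{UC}} = Z_{\textrm{UC}}/L$. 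Substituting everything into the Lemma's formula, the common factor $S_{\textrm{MC}}$ cancels between numerator and denominator, leaving $\textrm{DoF}_{\hat{G}} = \hat{\Omega}\hat{\beta}\big/\big(1 + \tfrac{\hat{\Omega}}{KL}\sum_{j\in\hat{\CJ}}(\hat{\beta}-\beta_{(j)})K_{(j)}\big)$, which is the claimed expression.

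I expect the main obstacle to be rigorously justifying the equal-distribution claim, i.e.\ that every user serves as a multicast target in the same number $\hat{\Omega} S_{\textrm{MC}}/K$ of transmissions, since it is this fact that makes the per-user sums collapse cleanly. It rests on the permutation symmetry of the subset enumeration underlying the reference delivery algorithm, which I would verify through the identity $\tfrac{\hat{\Omega}}{K}\binom{K}{\hat{\Omega}}\binom{\hat{\Omega}-1}{K\gamma} = \binom{K-1}{\hat{\Omega}-1}\binom{\hat{\Omega}-1}{K\gamma}$, counting the $\hat{\Omega}$-subsets containing a fixed user times their multiplicity. The remaining algebra is routine, and linear decodability of the pruned transmit vectors is already secured by Theorem~\ref{Th:Linear_Dec}.
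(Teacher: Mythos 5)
Your proposal is correct and follows essentially the same route as the paper's proof: both count the number of multicast transmissions in which each user appears (your double-counting identity $\tfrac{\hat{\Omega}}{K}\binom{K}{\hat{\Omega}}\binom{\hat{\Omega}-1}{K\gamma}=\binom{K-1}{\hat{\Omega}-1}\binom{\hat{\Omega}-1}{K\gamma}$ reproduces exactly the paper's per-user count), split each user's $\hat{\beta}$ reference subpackets into $\beta_k$ multicast and $\hat{\beta}-\beta_k$ unicast ones, group by antenna class, use $S_{\textrm{UC}}=Z_{\textrm{UC}}/L$, and cancel $S_{\textrm{MC}}$ in the Lemma's ratio. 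The only cosmetic difference is that you compute $Z_{\textrm{UC}}$ directly from the removed subpackets whereas the paper obtains it as $Z-Z_{\textrm{MC}}$ from the total subpacketization count, which is equivalent.
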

\begin{proof}

Following the delivery algorithm, the final subpacketization level of the phantom scheme is $\hat{\vartheta}\hat{\phi}$, and so, the total number of missing subpackets for all users is 
\begin{equation*}
\begin{array}{l}\label{Phantom_delivery51}
    \hspace{0mm}{Z_{}} = K (1-\gamma) {\hat{\vartheta}}\hat{\phi} = \displaystyle K(1-\gamma)\cdot\binom{K}{K\gamma} \binom{K-K\gamma-1}{\hat{\Omega}-K\gamma-1} \hat{\beta}_{}\\[2.5mm] \hspace{22.2mm}=
    \displaystyle K\cdot\binom{K-1}{\hat{\Omega}-1}\cdot\binom{\hat{\Omega}-1}{K\gamma}\cdot  \hat{\beta}_{}.  
\end{array}
\end{equation*}
Following the delivery algorithm, each user $k \in \CK$ appears in $\binom{K-1}{\hat{\Omega-1}}\binom{\hat{\Omega}-1}{K\gamma}$ multicast transmissions, and in each multicast transmission, it receives exactly ${\beta}_k$ subpackets. So, the total number of subpackets delivered to all users by multicasting can be calculated as
\begin{equation}
\begin{array}{l}\label{Phantom_delivery2}
     Z_{\textrm{MC}} = \displaystyle \binom{K-1}{\hat{\Omega}-1} \binom{\hat{\Omega}-1}{K\gamma} \sum\nolimits_{k\in\CK}\beta_k ,
\end{array}
\end{equation}
which, as $\beta_k = \beta_{(j)}$ for all $k \in \CK_{(j)}$, can be rewritten as
\begin{equation}
\begin{array}{l}\label{Phantom_delivery_MC_count}
    Z_{\textrm{MC}} = \displaystyle \binom{K-1}{\hat{\Omega}-1} \binom{\hat{\Omega}-1}{K\gamma} \cdot\sum\nolimits_{j\in\CJ}\big(\beta_{(j)}\cdot K_{(j)}\big).
\end{array}
\end{equation}
Now, as the rest of the subpackets are delivered via unicasting, we have $Z_{\textrm{UC}} = Z - Z_{\textrm{MC}}$, and hence,
\begin{equation}
\begin{array}{l}\label{Phantom_delivery3}
    \hspace{-4mm} Z_{\textrm{UC}} =
     \displaystyle \binom{K-1}{\hat{\Omega}-1} \binom{\hat{\Omega}-1}{K\gamma} 
     \sum\nolimits_{j\in \hat{\CJ}}\Big((\hat{\beta} - \beta_{(j)}) K_{(j)}\Big).
    \qquad 
\end{array}
\end{equation}
The last step is calculating $S_{\textrm{UC}}$. As we have assumed $d(s) = L$ for every $s \in [S_{\textrm{UC}}]$ and because all the remaining subpackets are delivered via unicasting, we can simply write $S_{\textrm{UC}} = \frac{Z_{\textrm{UC}}}{L}$. Finally, substituting $Z_{\textrm{MC}}$, $Z_{\textrm{UC}}$, $S_{\textrm{MC}}$, and $S_{\textrm{UC}}$ into the DoF formula in~\eqref{Phantom_delivery61}, we get
\begin{equation}
\begin{array}{l}\label{Phantom_delivery10}
    \textrm{DoF}_{\hat{G}}= \displaystyle 
     \frac{
      \displaystyle
     K\binom{K-1}{\hat{\Omega}-1}  \hat{\beta}}{
     \displaystyle
     \binom{K}{\hat{\Omega}} +   \frac{
     \binom{K-1}{\hat{\Omega}-1}}{\displaystyle L} \sum\nolimits_{j\in  
\hat{\CJ}}\!\!\Big((\hat{\beta} - \beta_{(j)}) K_{(j)}\Big)},\\[2mm] 
\end{array}
\end{equation}
and after simplification, the proof is completed.
\end{proof}
\section{Numerical Examples}
\label{section:Simulations}

\begin{table}[!b]
\vspace{10pt}
\centering
\caption{DoF of Asymmetric MIMO-CC Schemes for $G_{(1)}=2$, $G_{(2)}=4$, $L=12$, $\gamma=0.04$, for $(K_{(1)}$, $K_{(2)})$: a) $(25, 75)$, b) $(50, 50)$, c) $(75, 25)$   }
\vspace{-1mm}
\renewcommand{\arraystretch}{1.3}  
\fontsize{22}{24}\selectfont 
\vspace{-2mm}
\begin{subtable}{0.5\textwidth}
\caption{}
\resizebox{\textwidth}{!}{ 
\begin{tabular}{|c|c|c|c|c|c|c|c|>{\centering\arraybackslash}p{1.5cm}|>{\centering\arraybackslash}p{1.5cm}|}
\hline
\multicolumn{7}{|c|}{$\textrm{DoF}_{\hat{G}}$} & $\textrm{DoF}^{*}_{\check{G}}$ & \multicolumn{2}{|c|}{$\textrm{DoF}_{\CJ}^*$=20.36} \\ 
\hline
\diagbox[dir=NW,width=7.5em, height=2em]{$\hat{\beta}$ \text{or} $\beta_k$}{\raisebox{-.3em}{$\Omega$}} & 5 & 6 & 7 & 8 & 9 & 10 & 10 & 7 & 6 \\ 
\hline
2 & 10 & 12 & 14& 16 & 18 & 20 & 20 & 14 & -- \\ 
3 &  13.58 & 16.00 & 18.33 & 20.57 & -- & -- & -- & --&-- \\ 
4 & 16.55 & 19.20 & 21.68 &-- & -- & -- & -- & -- & 24 \\ 
\hline
\end{tabular}\label{aTable}
}
\end{subtable}

\vspace{0.5mm}
\begin{subtable}{0.5\textwidth}
\caption{ }
\resizebox{\textwidth}{!}{ 
\begin{tabular}{|c|c|c|c|c|c|c|c|>{\centering\arraybackslash}p{1.5cm}|>{\centering\arraybackslash}p{1.5cm}|}
\hline
\multicolumn{7}{|c|}{$\textrm{DoF}_{\hat{G}}$} & $\textrm{DoF}^{*}_{\check{G}}$ & \multicolumn{2}{|c|}{$\textrm{DoF}_{\CJ}^*$=17.78} \\ 
\hline
\diagbox[dir=NW,width=7.5em, height=2em]{$\hat{\beta}$ \text{or} $\beta_k$}{\raisebox{-.3em}{$\Omega$}} & 5 & 6 & 7 & 8 & 9 & 10 & 10 & 8 & 5 \\ 
\hline
2 & 10 & 12 & 14& 16 & 18 & 20 & 20 & 16 & -- \\ 
3 &  12.41 & 14.40 & 16.26 & 18 & -- & -- & -- & --&-- \\ 
4 & 14.12 & 16.00 & 17.68 &-- & -- & -- & -- & -- & 20 \\ 
\hline
\end{tabular} \label{bTable}
}
\end{subtable}

\vspace{.5mm}
\begin{subtable}{0.5\textwidth}
\caption{}
\resizebox{\textwidth}{!}{ 
\begin{tabular}{|c|c|c|c|c|c|c|c|>{\centering\arraybackslash}p{1.5cm}|>{\centering\arraybackslash}p{1.5cm}|}
\hline
\multicolumn{7}{|c|}{$\textrm{DoF}_{\hat{G}}$} & $\textrm{DoF}^{*}_{\check{G}}$ & \multicolumn{2}{|c|}{$\textrm{DoF}_{\CJ}^*$=17.45} \\ 
\hline
\diagbox[dir=NW,width=7.5em, height=2em]{$\hat{\beta}$ \text{or} $\beta_k$}{\raisebox{-.3em}{$\Omega$}} & 5 & 6 & 7 & 8 & 9 & 10 & 10 & 9 & 4 \\ 
\hline
2 & 10 & 12 & 14& 16 & 18 & 20 & 20 & 18 & -- \\ 
3 &  11.43 & 13.09 & 14.61 & 16 & -- & -- & -- & --&-- \\ 
4 & 12.31 & 13.71 & 14.93 &-- & -- & -- & -- & -- & 16 \\ 
\hline
\end{tabular}\label{cTable}
}
\end{subtable}\label{Table_1}

\end{table}
\begin{table}[!t]
\centering
\caption{DoF of MIMO-CC Schemes for  $G_{(1)}=2$,$G_{(2)}=4$, $L=12$, $K = 500$ }
\renewcommand{\arraystretch}{1.1}  
\fontsize{7}{8}\selectfont 
\vspace{-2mm}
\resizebox{0.43\textwidth}{!}{ 
\begin{tabular}{|c|c|c|c|c|c|c|c|>{\centering\arraybackslash}p{1.5cm}|>{\centering\arraybackslash}p{1.5cm}|}
\hline
$\gamma$&($K_{(1)}$, $K_{(2)}$) &{$\textrm{DoF}_{\hat{G}}^*$} & $\textrm{DoF}^{*}_{\check{G}}$ & {$\textrm{DoF}_{\CJ}^*$} \\  
\hline
&(10,490) & 180.17 & 112 & 162.86 \\ \cdashline{2-10}
&(20,480) & 156.65 & 112 & 138.78 \\ \cdashline{2-10}
0.1&(30,470) & 138.56 & 112 & 124.48 \\ \cdashline{2-10}
&(40,460) & 124.22 & 112 & 115.02  \\ \cdashline{2-10} 
&(50,450) & 112.57 & 112 & 108.31 \\ \cdashline{2-10} 
&(60,440) & 112 & 112 & 103.30  \\ \hline
&(50,450) & 66.51 & 52 &  58.95 \\ \cdashline{2-10}
&(75,425) & 58.41 & 52 & 52.75 \\ \cdashline{2-10}
0.04&(100,400) & 52.08 & 52 & 48.72 \\ \cdashline{2-10}
&(125,375) & 52 & 52 & 45.91 \\ \hline
&(100,400) & 25.26 & 22 & 23.33 \\ \cdashline{2-10}
0.01&(200,300) & 22 & 22 & 20 \\ \cdashline{2-10}
&(400,100) & 22 & 22 & 19.05 \\ 
\hline
\end{tabular}
}\label{Table_2}
\end{table}
Numerical examples are generated for two target user sets, \(\mathcal{K}\), with sizes \(K \in \{100, 500\}\). Users are further divided into \(J = 2\) groups based on their UE capability, with \(G_k \in \{2, 4\}\) antennas. The evaluation is conducted under various combinations of parameters, including \(\gamma\), \(L = 12\), \(G_{(1)} = 2\), \(G_{(2)} = 4\), \(\Omega\), \(\hat{G}\), and \(G_k = G_{(j)}\) for all \(k\), where \(j \in [2]\) identifies the group \(\CK_{(j)}\) to which the \(k\)-th user belongs. The sizes of the user groups are chosen such that the coded caching gain, \(K_{(j)} \cdot \gamma \in \mathbb{N}\), remains an integer for simplicity.

In Table~\ref{Table_1}, we illustrate the impact of the UE group size, associated with two different Rx antenna capabilities, on the achievable Degrees of Freedom (DoF) for each proposed scheme. As observed, the state-of-the-art \(\mathrm{min}-G\) solution 
provides the same DoF value as the phantom scheme in a specific setup where $\hat{G}=\check{G}$ and $\hat{\Omega}=\Omega^*_{\check{G}}$.
Both the phantom and grouping schemes can exceed the DoF of \(\mathrm{min}-G\) by leveraging varying spatial multiplexing capability of each UE. However, the phantom scheme achieves a greater DoF boost than the grouping scheme, as it simultaneously exploits the maximum available CC-gain and the Rx spatial multiplexing capabilities of all users.    
 Moreover, Tables~\ref{aTable}--\ref{cTable} demonstrate that the DoF values of the phantom scheme can be fine-tuned 
 depending on \(\hat{G}\) and \(\Omega\), while adhering to the linear decodability criterion outlined in Theorem~\ref{Th:Linear_Dec}. 

In Table~\ref{Table_2}, we assess the achievable DoF values offered by the proposed schemes versus the cache ratio per UE $\gamma$. The analysis highlights the remarkable scalability of the achievable DoF in the proposed designs as both user set sizes 
and the design parameter $\gamma$ (CC-gain) are increased.
In Table~\ref{Table_2}, it is evident that the DoF gap for the phantom and grouping schemes become significantly more pronounced as the user set sizes and \(\gamma\) increase for \(K_{(1)} < K_{(2)}\). Additionally, for larger values of \(\gamma\), the DoF of the phantom and grouping schemes surpasses that of \(\min-G\) when there is a greater disparity between the user set sizes, unlike for smaller values of \(\gamma\). Furthermore, the phantom scheme can exceed the DoF of the grouping scheme by simultaneously effectively harnessing the CC gain and achieving an adjustably higher Rx spatial multiplexing gain per user. On the other hand, for \(K_{(1)} \geq K_{(2)}\), the DoF of the phantom or \(\min-G\) schemes exceeds that of the grouping scheme.
Altogether, the proposed schemes provide remarkable flexibility by considering both network and design parameters. They enable seamless scheduling to switch between schemes, maximizing the DoF for asymmetric MIMO-CC systems while addressing the limitations of individual approaches.

\section{Conclusion}
\label{sec:conclusions}
This paper studies data delivery in cache-aided MIMO communications with an asymmetric number of antennas on the users. Three data delivery schemes are proposed, each offering distinct advantages by maximizing the caching or spatial multiplexing gain or enabling a trade-off between them. Closed-form expressions of the achievable DoF for each delivery scheme are derived, and their performance is compared through numerical analysis. 
\newpage

\newpage

\bibliographystyle{IEEEtran}
\bibliography{conf_short,IEEEabrv,references,whitepaper}
\clearpage
   
\section{Appendix}

\subsection{Proof of Theorem~\ref{Th:Linear_Dec}}
\label{app_A}
We show that after the reception of each ${\Bx}_{\textrm{MC}}(s)$, every user $k \in \CK_{\textrm{MC}}(s)$ can decode $\beta_k$ subpackets of its requested file interference-free and with a linear receiver. 
%
%
Given the same insight outlined in~\cite{naseritehrani2024cacheaidedmimocommunicationsdof},
and the fact that $\hat{\beta}-\beta_k$ subpackets of the file requested by each user $k \in \CK_{\textrm{MC}}(s)$ are randomly dropped from $\hat{\Bx}_{\textrm{MC}}(s)$ to build ${\Bx}_{\textrm{MC}}(s)$,
the maximum number of subpackets $W_{\CP}^q(k)$ in ${\Bx}_{\textrm{MC}}(s)$ with the same subfile index $\CP$ is $\big\lceil \nicefrac{\hat{\beta}}{\binom{\hat{\Omega}-1}{K\gamma}} \big\rceil$ (subpackets with the same subfile index may remain after removal). On the other hand, by definition, the beamformer vector $\Bw_{\CP,k}^q$ has to null out the inter-stream interference caused by $W_{\CP}^q(k)$ to every stream decoded by user $k'$ in $\CK_{\textrm{MC}}(s) \backslash (\CP \cup \{k\})$ (the interference caused to other users is removed by their cache contents).
Let us use $\BU_k \in \mathbb{C}^{G_k \times \beta_k}$ to denote the receive beamforming matrix at user~$k$ for decoding its $\beta_k$ parallel streams. For transmission \(s\), the \emph{equivalent interference channel} of a subpacket \(W_{\CP}^q(k)\) included in \({\Bx}_{\textrm{MC}}(s)\) is defined as:
\begin{equation}
\label{eq:equi_intf_chan}
\begin{array}{l}
\bar{\BH}_{\mathcal{P},k}   = [\BH_{k'}\herm\BU_{k'}]\herm, \quad \forall k' \in \CK_{\textrm{MC}}(s) \backslash (\mathcal{P}\cup \{k\}),
\end{array}
\end{equation}
where \([\;\cdot\;]\) denotes horizontal matrix concatenation. For each subpacket \(W_{\CP,k}^q\) transmitted by \({\Bx}_{\textrm{MC}}(s)\), the interference nulling condition for its respective beamformer \(\Bw_{\CP,k}^q\) requires:
\begin{equation}\label{eq:beamforming_in_nullspace}
    \Bw_{\CP,k}^q \in \mathrm{Null}(\bar{\BH}_{\CP,k}), 
\end{equation}
where $\mathrm{Null}(\cdot)$ shows the null space.
Note that for a given user index \(k\) and subfile index \(\CP\), the definition in~\eqref{eq:equi_intf_chan} applies to each subpacket \(W_{\CP}^q(k)\) transmitted by \({\Bx}_{\textrm{MC}}(s)\), regardless of the subpacket index \(q\).
Now, to ensure that each user~$k$ can decode all $\beta_k \leq G_k$ parallel streams $W_{\CP,k}^q$ sent to it with ${\Bx}_{\textrm{MC}}(s)$, it is essential that transmit beamformers $\Bw_{\CP,k}^q$ are linearly independent. 
The bottleneck for successfully decoding subpackets with the same subfile index \(\CP\) arises because the number of such subpackets, \(\lceil\nicefrac{\hat{\beta}}{\binom{\hat{\Omega}-1}{K\gamma}}\rceil\), must be limited by the dimensions of \(\mathrm{Null}(\bar{\BH}_{\CP,k})\). From~\eqref{eq:equi_intf_chan}, \(\bar{\BH}_{\CP,k}\) is constructed by concatenating \(\hat{\Omega} - K\gamma - 1\) matrices \(\BH_{k'}\herm\BU_{k'}\), each of size \(L \times \beta_{k'}\). Thus, \(\bar{\BH}_{\CP,k} \in \mathbb{C}^{\sum_{k' \in \CK_{\textrm{MC}}(s) \backslash (\mathcal{P} \cup \{k\})} \beta_{k'} \times L}\). Using 
$\mathrm{nullity(\cdot)}$ to denote the dimensions of the null space, we can use the rank-nullity theorem~\cite{meyer2023matrix} to write
\begin{equation*}
 \label{eq:rank-null-theorem}
 \begin{array}{l}
\mathrm{nullity(\bar{\BH}_{\CP,k})} =L - \mathrm{rank}(\bar{\BH}_{\mathcal{P},k})  
\\[2mm] \hspace{21.5mm}= L - \sum\nolimits_{k' \in \CK_{\textrm{MC}}(s) \backslash (\mathcal{P}\cup \{k\})} \beta_{k'}.
\end{array}
 \end{equation*}
So, for successful decoding of subpackets with the same subfile index, we should have
\begin{equation}
\label{eq:ceil_ineq_dof}
\begin{array}{l}
     \hat{\beta}  \le \big(L -\sum_{k' \in \CK_{\textrm{MC}}(s) \backslash (\mathcal{P}\cup \{k\})} \beta_{k'}\big)\binom{\hat{\Omega}-1}{K\gamma}.\\[3mm]
\end{array}
\end{equation}
Additionally, to ensure linear decodability for all users, the right-hand side (R.H.S) of~\eqref{eq:ceil_ineq_dof} must be larger than or equal to the R.H.S of~\eqref{DoF_bndd}. This condition is always satisfied because:
\begin{equation}
\label{eq:ceil_ineq_dof1}
\begin{array}{l}
     \sum\nolimits_{k' \in \CK_{\textrm{MC}}(s) \backslash (\mathcal{P}\cup \{k\})} \beta_{k'}  \le  (\hat{\Omega}-K\gamma-1)\hat{\beta}.
\end{array}
\end{equation}
Combined with $\beta_k \le \hat{\beta}$ and the decoding criterion $\beta_k \le G_k$, this guarantees linear decodability for each user. 

\subsection{Transmission vector Design: Example
}
\label{app_C}

\begin{exmp}
\label{exmp:x0serv}
\normalfont 
In this example, we revisit and evaluate the delivery algorithms proposed in Sections~\eqref{subsec:minG}--\eqref{subsec: phantom}, applied to a specific network setup with \(K = 10\) users divided into (\(J = 2\)) equal-sized groups with spatial multiplexing gains of \(G_{(1)} = 2\) and \(G_{(2)} = 4\). 
The cache ratio of all users is \(\gamma = 0.2\), and the transmitter is equipped with \(L = 4\) antennas. For simplicity, we assume that users 1, 2, 3, $\cdots$ request files \(A, B, C, \dots\), and for each scheme, we illustrate one sample transmission vector.


\subsubsection{\texorpdfstring{The min-$G$ scheme}{The min-G scheme}}

First, we substitute \(\check{G} = 2\) and \(K\gamma = 2\) into~\eqref{eq:optimal_omega_beta_minG} to determine the design parameters \(\Omega_{\check{G}}^* = 4\) and \(\beta_{\check{G}}^* = 2\), and subsequently \(\check{\vartheta} = \binom{10}{2}\), \(\check{\phi} = 14\), and \(\check{S} = 3\binom{10}{4}\).
As a result, 
for each transmission \(s \in [\check{S}]\), the stream size is \(f(s) = \nicefrac{F}{\check{\vartheta} \check{\phi}}\), \(\Omega(s) = \Omega^*_{\check{G}} = 4\) are served, and \(\beta(s) = \beta^*_{\check{G}} = 2\) streams are delivered to each user. Substituting these values into~\eqref{eq:asym_DoF} and simplifying, the Degrees of Freedom (DoF) of the min-\(G\) scheme are given by:  $\mathrm{DoF}^{*}_{\check{G}} = \Omega^*_{\check{G}} \cdot \beta^*_{\check{G}}= 8$.

Let us consider a transmission $s$ to the user subset \(\CK(s) = \{1, 2, 3, 6\}\), requesting files \(\{A, B, C, F\}\), respectively. To construct transmission vectors for this target user set, the packet index sets are defined as \(\SfP_k = \{\CP \subseteq \CK(s) \backslash \{k\}, |\CP| = 2\}\)~\cite[Theorem 1]{naseritehrani2024cacheaidedmimocommunicationsdof}. This results in: 
\begin{equation}
\begin{aligned}
&\SfP_1 = \{23, 26, 36\}, \quad \SfP_2 = \{13, 36, 16\}, 
\\& \SfP_3 = \{12, 16, 26\}, \quad \SfP_6 = \{12, 13, 23\},
\end{aligned}
\end{equation}
where braces for packet indices are omitted here for notational simplicity but are included in the transmission vectors. Now, for each \(k \in \CK(s)\) and \(\CP \in \SfP_k\), we define \(\CN_{\CP,k} = \{W_{\CP,k}^1, W_{\CP,k}^2\}\), and select \(\beta_{\check{G}}^* = 2 \leq  \check{G}\) subpackets per user \(k\) for the transmission. Without loss of generality, assume the following subpackets are chosen for the first transmission:  
\begin{equation*}
\begin{aligned}
A_{\{23\}}^1, \, A_{\{26\}}^1, \, B_{\{13\}}^1, \, B_{\{16\}}^1, \, C_{\{12\}}^1, \, C_{\{16\}}^1, \, F_{\{12\}}^1, \, F_{\{23\}}^1,
\end{aligned}
\end{equation*}
and the remaining subpackets are reserved for subsequent transmissions. The transmission vector for these subpackets is built as: 
\begin{equation}
\begin{aligned}
\Bx(s) &=  \Bw_{\{23\},1}^1 A_{\{23\}}^1 + \Bw_{\{26\},1}^1 A_{\{26\}}^1 \\
&+ \Bw_{\{13\},2}^1 B_{\{13\}}^1 +  \Bw_{\{16\},2}^1 B_{\{16\}}^1  \\
&+  \Bw_{\{12\},3}^1 C_{\{12\}}^1+  \Bw_{\{16\},3}^1 C_{\{16\}}^1 \\
&+   \Bw_{\{12\},6}^1 F_{\{12\}}^1 +  \Bw_{\{23\},6}^1 F_{\{23\}}^1,\\[3mm]
\end{aligned}
\end{equation}
where based on the delivery algorithm proposed in~\cite[Theorem 1]{naseritehrani2024cacheaidedmimocommunicationsdof}, each user can decode its desired streams without interference. For example, user~1 can decode both $A_{\{23\}}^1$ and $A_{\{26\}}^1$ as the interference from $F_{\{23\}}^1$ is suppressed by beamforming, and the interference from other terms is removed by the cache contents of user~1.

\subsubsection{The grouping scheme}
First, we classify users into \(J=2\) groups based on their number of antennas. As the number of users in both groups is the same, the coded caching gain in each group will be \(K_{(j)}\gamma=1\). Then, using \(G_{(1)}=2\) and \(G_{(2)}=4\) in~\eqref{eq:optimal_omega_beta_grouping}, we can determine group-specific design parameters \((\Omega_{(1)}^*=3, \beta_{(1)}^*=2)\) and \((\Omega_{(2)}^*=2, \beta_{(2)}^*=4)\), and subsequently, we get \({\vartheta}_{(1)} = {\vartheta}_{(2)} = 5\), \({\phi}_{(1)} = 6\), \({\phi}_{(2)} = 4\), \(\check{S}_{(1)} = 20\), and \(\check{S}_{(2)} = 10\).
Substituting these values into~\eqref{eq:total_DoF_grouping} and simplifying, the DoF of the grouping scheme for this network is: $\mathrm{DoF}_{\CJ}^* = 6.86$.

Let us review a sample transmission vector for each group. An arbitrary transmission \(s_1\) for group~1 delivers $\beta(s_1) = \beta^*_{(1)} = 2$ parallel streams, each with size $f(s_1) = \nicefrac{F}{\displaystyle \vartheta_{(1)} \phi_{(1)}}$ = \nicefrac{F}{30}, to a subset $\CK(s_1)$ of users in group~1 with size $\Omega_{(1)}^*=3$. Let us assume \(\CK(s_1) = \{1,2,3\}\), and denote the files requested by users \(1\)--\(3\) as \(A\), \(B\), and \(C\), respectively. The first step in constructing transmission vectors for these users is to define packet index sets \(\SfP_k = \{\CP \subseteq \CK(s_1) \setminus \{k\}, |\CP| = 2\}\) as follows:
\begin{equation}
\begin{aligned}
&\SfP_1 = \{2,3\}, \quad \SfP_2 = \{1,3\}, \quad \SfP_3 = \{1,2\}.
\end{aligned}
\end{equation}
Next, for each \(k \in \CK(s_1)\) and \(\CP \in \SfP_k\), we define \(\CN_{\CP,k} = \{W_{\CP,k}^1, W_{\CP,k}^2\}\), and select \(\beta^*_{(1)} = 2\) subpackets for each  target user. Without loss of generality, assume the following subpackets are chosen:
\begin{equation*}
\begin{aligned}
A_{\{2\}}^1, A_{\{3\}}^1, B_{\{1\}}^1, B_{\{3\}}^1, C_{\{1\}}^1, C_{\{2\}}^1. 
\end{aligned}
\end{equation*}
This results in the transmission vector
\begin{equation}
\begin{aligned}
\Bx(s_1) &=  \Bw_{\{2\},1}^1 A_{\{2\}}^1 + \Bw_{\{3\},1}^1 A_{\{3\}}^1 \\
&+ \Bw_{\{1\},2}^1 B_{\{1\}}^1 +  \Bw_{\{3\},2}^1 B_{\{3\}}^1  \\
&+  \Bw_{\{1\},3}^1 C_{\{1\}}^1+  \Bw_{\{2\},3}^1 C_{\{2\}}^1,
\end{aligned}\\
\end{equation}
which delivers two subpackets to each user in $\CK(s_1) = \{1,2,3\}$ interference-free.

Now, for users in group~2, an arbitrary transmission \(s_2\) delivers $\beta(s_2) = \beta^*_{(2)} = 4$ parallel streams, each with size $f(s_2) = \nicefrac{F}{\displaystyle \vartheta_{(2)} \phi_{(2)}}$ = \nicefrac{F}{20}, to a subset $\CK(s_2)$ of users in group~2 with size $\Omega_{(2)}^*=2$. Let us assume \(\CK(s_2) = \{6,7\}\), and denote the files requested by users \(6\),\(7\) as \(F\) and \(G\), respectively. The packet index sets \(\SfP_k\) are built as
\begin{equation}
\begin{aligned}
\SfP_6 = \{7\}, \quad \SfP_7 = \{6\},
\end{aligned}
\end{equation}
and, assuming $F_{\{7\}}^q, G_{\{6\}}^q$, for all $q \in [4]$ are chosen for transmission, the resulting transmission vector is:
\begin{equation}
\begin{aligned}
\Bx(s_2) &=  \Bw_{\{7\},6}^1 F_{\{7\}}^1 +  \Bw_{\{7\},6}^2 F_{\{7\}}^2 +\Bw_{\{7\},6}^3 F_{\{7\}}^3 \\ &+  \Bw_{\{7\},6}^4 F_{\{7\}}^4 +\Bw_{\{6\},7}^1 G_{\{6\}}^1 +  \Bw_{\{6\},7}^2 G_{\{6\}}^2 \\
&  +\Bw_{\{6\},7}^3 G_{\{6\}}^3 + \Bw_{\{6\},7}^4 G_{\{6\}}^4,
\end{aligned}\\
\end{equation}
which delivers four parallel streams to users~6 and~7 interference-free.

\subsubsection{The phantom scheme}

By definition, $\hat{G}$ can vary between $G_{(1)}=2$ and $G_{(2)} = 4$. Let us assume $\hat{G}=4$. We can select any pair of design parameters \((\hat{\Omega}, \hat{\beta})\) satisfying the linear decodability condition in~\eqref{DoF_bndd}. Let us assume \(\hat{\Omega}=3\) and \(\hat{\beta}=4\). This results in $\hat{\vartheta} =\binom{10}{2}$, $\hat{\phi} = 4$, $S_{\mathrm{MC}} = 120$, $Z_{\mathrm{MC}}=1080$, and $Z_{\mathrm{UC}}=360$. If we can serve 4 parallel streams (i.e., if $d(s) = 4$) with each unicast transmissions, then $S_{\mathrm{UC}} = \frac{Z_{\mathrm{UC}}}{4} = 90$, and using~\eqref{Phantom_delivery61}, we can calculate $\mathrm{DoF}_{\hat{G}} = 6.86$. In the following, we represent an arbitrary multicasting and unicasting transmission indices by $s_{\mathrm{MC}}\in [S_{\mathrm{MC}}] $ and $s_{\mathrm{UC}}\in [S_{\mathrm{UC}}]$, respectively.

\textbf{Multicast transmission design.}
Let us review how the transmission vector is built for the user set \(\CK_{\mathrm{MC}}(s_{\mathrm{MC}}) = \{1, 2, 6\}\). Assume these users request files $A$, $B$, and $F$, respectively. The first step is to build a transmission vector $\hat{\Bx}_{\mathrm{MC}} (s_{\mathrm{MC}})$ for the same set of users, assuming users~1 and~2 also have $\hat{G} = 4$ antennas. To do so, we determine the packet index sets
$\SfP_k = \{\CP \subseteq  \CK_{\mathrm{MC}}(s_{\mathrm{MC}}) \backslash \{k\}, |\CP| = 2\}$ as:
\begin{equation*}
\begin{aligned}
\SfP_1 = \{26\}, \quad \SfP_2 = \{16\}, \quad \SfP_6 = \{12\}.
\end{aligned}
\end{equation*}
Then, 
for each combination \(\CP \in \SfP_k\), we define \(\CN_{\CP, k} = \{W_{\CP, k}^1, W_{\CP, k}^2, W_{\CP, k}^3, W_{\CP, k}^4\}\), and select $\hat{\beta} = 4$ subpackets for each user $k \in  \CK_{\mathrm{MC}}(s_{\mathrm{MC}})$ to be transmitted by $\hat{\Bx}_{\mathrm{MC}} (s_{\mathrm{MC}})$. Assuming we have selected
\begin{equation*}
\begin{aligned}
 A_{\{26\}}^1, A_{\{26\}}^2, &A_{\{26\}}^3, A_{\{26\}}^4, B_{\{16\}}^1, B_{\{16\}}^2, B_{\{16\}}^3, B_{\{16\}}^4,\\&
F_{\{12\}}^1, F_{\{12\}}^2, F_{\{12\}}^3, F_{\{12\}}^4,
\end{aligned}
\end{equation*}
the resulting multicasting transmission vector is:
\begin{equation}
\begin{aligned}
\hat{\Bx}_{\mathrm{MC}}(s_{\mathrm{MC}}) &=  \Bw_{\{26\},1}^1 A_{\{26\}}^1 + \Bw_{\{26\},1}^2 A_{\{26\}}^2 \\
&+ \Bw_{\{26\},1}^3 A_{\{26\}}^3 + \Bw_{\{26\},1}^4 A_{\{26\}}^4  \\
&+ \Bw_{\{16\},2}^1 B_{\{16\}}^1 +  \Bw_{\{16\},2}^2 B_{\{16\}}^2  \\
&+  \Bw_{\{16\},2}^3 B_{\{16\}}^3 +  \Bw_{\{16\},2}^4 B_{\{16\}}^4 \\
&+   \Bw_{\{12\},6}^1 F_{\{12\}}^1 +  \Bw_{\{12\},6}^2 F_{\{12\}}^2 
\\
&+ \Bw_{\{12\},6}^3 F_{\{12\}}^3 +  \Bw_{\{12\},6}^4 F_{\{12\}}^4.\\
\end{aligned}
\end{equation}
%
Finally, by removing \(\hat{\beta} - \beta_k\) subpackets for each user \(k\) (linear decodability is maintained by eliminating \(\hat{\beta} - \beta_k\) subpackets for each user \(k\), as outlined in Theorem~\ref{Th:Linear_Dec})  the final multicasting transmission vector ${\Bx}_{\mathrm{MC}}(s_{\mathrm{MC}})$ is given as:
%
\begin{equation}\label{eq:x_mc_exmp}
\begin{aligned}
{\Bx}_{\mathrm{MC}}(s_{\mathrm{MC}}) &=  \Bw_{\{26\},1}^1 A_{\{26\}}^1 + \Bw_{\{26\},1}^4 A_{\{26\}}^4  \\
&+ \Bw_{\{16\},2}^1 B_{\{16\}}^1 +  \Bw_{\{16\},2}^3 B_{\{16\}}^3  \\
&+   \Bw_{\{12\},6}^1 F_{\{12\}}^1 +  \Bw_{\{12\},6}^2 F_{\{23\}}^2 
\\
&+ \Bw_{\{12\},6}^3 F_{\{12\}}^3 +  \Bw_{\{12\},6}^4 F_{\{12\}}^4,\\
\end{aligned}
\end{equation}
where $A_{\{26\}}^1$, $A_{\{26\}}^4$, $B_{\{16\}}^1$, and $B_{\{16\}}^3$ are left to be delivered in subsequent unicast transmissions. It can be easily verified that using ${\Bx}_{\mathrm{MC}}(s_{\mathrm{MC}})$ in~\eqref{eq:x_mc_exmp}, users~1, 2, and~6 can decode 2, 2, and 4 parallel streams, respectively.

\textbf{Unicast transmission design.}
All the subpackets removed from multicast transmissions are later sent by unciasting. For example, let us consider how $A_{\{26\}}^2$, $A_{\{26\}}^3$, $B_{\{16\}}^2$, and $B_{\{16\}}^4$ could be transmitted to users~1 and~2 with a single unicast transmission. This can be achieved simply by transmitting:
\begin{equation}
\begin{aligned}
{\Bx}_{\mathrm{UC}}(s_{\mathrm{UC}}) &=  \Bw_{\{26\},1}^2 A_{\{26\}}^2 + \Bw_{\{26\},1}^3 A_{\{26\}}^3  \\[2mm]
+&\Bw_{\{16\},2}^2 B_{\{16\}}^2 +  \Bw_{\{16\},2}^4 B_{\{16\}}^4.
\end{aligned}
\end{equation}

\end{exmp}


\end{document}